\def\ba{\begin{array}}
	\def\ea{\end{array}}
\def\baa{\begin{align}}
	\def\eaa{\end{align}}
\newcommand{\bsq}{\begin{subequations}}
	\newcommand{\esq}{\end{subequations}}
\newcommand{\beq}{\begin{equation}}
\newcommand{\eeq}{\end{equation}}
\newcommand{\bq}{\begin{eqnarray}}
\newcommand{\eq}{\end{eqnarray}}
\newcommand{\bqn}{\begin{eqnarray*}}
	\newcommand{\eqn}{\end{eqnarray*}}
\newcommand{\bee}{\begin{enumerate}}
	\newcommand{\eee}{\end{enumerate}}
\newcommand{\bi}{\begin{itemize}}
	\newcommand{\ei}{\end{itemize}}
\newcommand{\wang}[1]{\ifthenelse{\boolean{showcomments}}
	{ \textcolor[rgb]{1,0,1}{(ZW:  #1)}}{}}
\newcommand{\fliu}[1]{\ifthenelse{\boolean{showcomments}}
	{ \textcolor{red}{(FL:  #1)}}{}}
\newcommand{\slow}[1]{\ifthenelse{\boolean{showcomments}}
	{ \textcolor{blue}{(SL:  #1)}}{}}
\theoremstyle{definition}
\newtheorem{theorem}{Theorem}
\newtheorem{lemma}[theorem]{Lemma}
\theoremstyle{definition}
\newtheorem{definition}{Definition}
\newtheorem{remark}{Remark}
\newtheorem{assumption}{\textit{Assumption}}}
\journal{xxx}
\begin{document}
\setstretch{0.996}
\begin{frontmatter}

\title{Asynchronous Distributed Voltage Control in Active Distribution Networks}

\tnotetext[mytitlenote]{This work was supported  by the National Natural Science Foundation of China ( No. 51677100, U1766206, No. 51621065).}

\author[thu]{Zhaojian~Wang}
\author[thu]{Feng~Liu\corref{mycorrespondingauthor}}\ead{lfeng@mail.tsinghua.edu.cn}
\author[thu]{Yifan Su}
\author[xjtu]{Boyu Qin}

\cortext[mycorrespondingauthor]{Corresponding author}

\address[thu]{State Key Laboratory of Power Systems, Department of Electrical Engineering, Tsinghua University, Beijing 100084, China}
\address[xjtu]{State Key Laboratory of Electrical Insulation and Power Equipment, School of Electrical Engineering, Xi'an Jiaotong University, Xi'an 710049, China}

\begin{abstract}
	With  the explosion of distributed energy resources (DERs), voltage regulation in distribution networks has been facing  a great challenge. This paper derives an asynchronous distributed voltage control strategy based on the partial primal-dual gradient algorithm, where both active and reactive controllable power of DERs are considered.     
	Different types of asynchrony due to imperfect communication or practical limits, such as random time delays and non-identical sampling/control rates, are  fitted into a unified analytic framework. The asynchronous algorithm is then converted into a fixed-point problem by employing the operator splitting method, which leads to a convergence proof with mild conditions. Moreover, an online implementation method is provided to make the controller adjustable to time-varying environments. Finally, numerical experiments  are carried out on a rudimentary 8-bus system and the IEEE-123 distribution network to verify the effectiveness of the proposed method.
\end{abstract}

\begin{keyword}
	Distributed control, distribution networks, (partial) primal-dual gradient algorithm, asynchronous algorithm, voltage control
\end{keyword}
\end{frontmatter}

\section{Introduction}

With the proliferation of distributed energy resources (DERs), such as small hydro plants, Photovoltaics (PVs) and energy storage systems, voltage regulation in active distribution networks is greatly challenged, On the one hand, the voltage quality  remarkably degrades, e.g., the voltage may fluctuate rapidly due to the variation of renewable generations and over-voltage exists at the buses DERs connected. On the other hand,  many DERs, such as some small hydro plants \cite {han2014optimization} and inverter-integrated DERs \cite {turitsyn2011options}, have great potential of  voltage regulation by appropriately managing their active or reactive power outputs. Beyond the capability of traditional voltage regulation schemes, these challenges call for a new voltage control paradigm. 

The voltage control in a distribution network aims to minimize the voltage mismatch by regulating active or reactive power outputs of controllable DERs. Generally speaking, it can be viewed as a type of optimal power flow (OPF) problems, where the branch power flow model is usually utilized \cite{baran1989optimal2,baran1989optimal}. Similar topics have been studied extensively in the literature. Related works can roughly be categorized into three classes in terms of the communication requirements: centralized control,  local control and distributed control. In the centralized voltage control, a global optimization problem is formulated and solved by a central controller to determine optimal set-points for the overall system \cite{farivar2011inverter,farivar2012optimal,kekatos2015stochastic}. In this case, the central controller collects all the required information and communicates with all DERs. However, it suffers from the single-point-failure issue and costs long computation time when the number of DERs is  large. 
As for the local voltage control, locally available information such as bus voltage magnitude is utilized to design the controller \cite{turitsyn2011options}. In the problem formulation, the linearized distribution power flow is usually utilized, and the objective function is a specific form \cite{zhu2016fast, liu2017decentralized, zhou2018reverse}. 
As it uses only local information, the response is  rapid. However, the control objective is restricted to specific types, making it less flexible.
The distributed voltage control can avoid the disadvantages of centralized and local controls to some extent \cite{antoniadou2017distributed}. Compared with the centralized control, there is no central controller and communication is usually between immediate neighbors \cite{vsulc2014optimal, bolognani2015distributed, zhang2015optimal, liu2018distributed, liu2018hybrid} or two-hop neighbors \cite{Tang2018Fast}. Compared with the local voltage control, the objective function can be more general and practical. In existing literature, the distributed voltage control is usually synchronous. However, asynchrony widely exists in power systems, such as communication time delay caused by congestion or even failure and different sampling or computation rates. In the synchronous case, the slowest bus and communication channel may cripple the system \cite{peng2016arock, yi2018asynchronous}.

This paper designs an asynchronous distributed strategy for voltage control in distribution networks. Various types of asynchrony in power systems are considered, such as communication delay, and different sampling rates, which are fitted into a unified framework. This is different from \cite{bolognani2015distributed}, which only considers asynchronous iterations and assumes no communication delay. The proposed method is also different from the asynchronous control in \cite{zhu2016fast}, which is local but with restriction on the objective function. 
In this paper, we consider the regulation of both active and reactive controllable power of DERs. In the controller design, partial primal-dual gradient algorithm is utilized, which is formulated as the form of the Krasnosel'ski{\v{i}}-Mann iteration. In this way, the objective function is only required to be \emph{convex} and have a Lipschitzian gradient, which relaxes the assumption commonly used in most of existing literature (\emph{strong convexity} is required). Moreover, the operator splitting method is employed to convert the control algorithm into a fixed-point problem, which greatly simplifies  the convergence proof. In terms of practical application, a online implementation method is provided to make the controller adjustable to time-varying environments.

The rest of this paper is organized as follows. In Section 2, we introduce some preliminaries and the model of distribution networks. Section 3 formulates the optimal voltage control problem. The asynchronous controller is investigated in Section 4. In Section 5, convergence and optimality of the equilibrium are proved. Section 6 introduces the implementation of the proposed method. We confirm the performance of  controllers via simulations on an 8-bus system and IEEE 123-bus system in Section 7. Section 8 concludes the paper.

\section{Preliminaries and System Modeling}

\subsection{Preliminaries} 

In this paper, $\mathbb{R}^n$ ($\mathbb{R}^n_{+}$) is the $n$-dimensional (nonnegative) Euclidean space. For a column vector $x\in \mathbb{R}^n$ (matrix $A\in \mathbb{R}^{m\times n}$), $x^{{T}}$($A^{{T}}$) denotes its transpose.  For vectors $x,y\in \mathbb{R}^n$, $x^{{T}}y=\left\langle x,y \right\rangle$ denotes the inner product of $x,y$. 
$\left\|x \right\|=\sqrt{x^{{T}}x}$ denotes the norm induced by the inner product. 
For a positive definite matrix $G$, denote the inner product $\left\langle x,y \right\rangle_G=\left\langle Gx,y \right\rangle$. Similarly, the $G$-matrix induced norm $\left\|x \right\|_G=\sqrt{\left\langle Gx,x \right\rangle}$. 
Use $I_n$ to denote the identity matrix with dimension $n$. Sometimes, we also omit $n$ and use $I$ to denote the identity matrix with proper dimension if there is no confusion. For a matrix $A=[a_{ij}]$, $a_{ij}$ stands for the entry in the $i$-th row and $j$-th column of $A$. Use $\prod_{i=1}^n\Omega_i$ to denote the Cartesian product of the sets $\Omega_i, i=1, \cdots, n$. 
Given a collection of $y_i$ for $i$ in a certain set $Y$, define $\text{col}(y_{j}):=(y_1, y_2, \cdots, y_n)^T$ and denote its vector form by $\textbf{y}:=\text{col}(y_j)$. Define the projection of $x$ onto a set $\Omega$ as 
\begin{equation}
\label{def:projection}
\setlength{\abovedisplayskip}{4pt}	
\setlength{\belowdisplayskip}{4pt}
\mathcal{P}_{\Omega}(x)=\arg \min_{y\in \Omega}\left\|x-y \right\|
\end{equation}
Use ${\rm{Id}}$ to denote the identity operator, i.e., ${\rm{Id}}(x)=x$, $\forall x$. Define the normal cone as $N_\Omega(x)=\{v|\left\langle v, y-x\right\rangle\le 0, \forall y\in \Omega\}$. We have $\mathcal{P}_{\Omega}(x)=({\rm{Id}}+N_{\Omega})^{-1}(x)$ \cite{yi2017distributed}, \cite[Chapter 23.1]{bauschke2011convex}.

%


For a set-valued operator $\mathcal{U}:\mathbb{R}^n\rightarrow2^{\mathbb{R}^n}$, its domain is $\text{dom}\mathcal{U}:=\{x\in \mathbb{R}^n|\mathcal{U}x\neq \emptyset\}$. The graph of $\mathcal{U}$ is defined as $\text{gra}\mathcal{U}:=\{(x,u)\in \mathbb{R}^n\times\mathbb{R}^n|u\in\mathcal{U}x \}$. An operator $\mathcal{U}$ is monotone if $\forall(x,u), \forall(y,v)\in \text{gra}\mathcal{U}$, we have $\left\langle x-y, u-v\right\rangle\ge 0$. It is called maximally monotone if $\text{gra}\mathcal{U}$ is not strictly contained in the graph of any other monotone operator.
For a single-valued operator $\mathcal{T}:\Omega\subset\mathbb{R}^n\rightarrow\mathbb{R}^n$, a point $x\in \Omega$ is a fixed point of $\mathcal{T}$ if $\mathcal{T}(x)\equiv x$. The set of fixed points of $\mathcal{T}$ is denoted by ${Fix}(\mathcal{T})$. $\mathcal{T}$ is nonexpansive if $\left\|\mathcal{T}(x)- \mathcal{T}(y)\right\|\le\left\|x- y\right\|, \forall x, y \in \Omega$. $\mathcal{T}$ is firmly nonexpansive if $\left\|\mathcal{T}(x)- \mathcal{T}(y)\right\|^2+\left\|({\rm{Id}}-\mathcal{T})(x)- ({\rm{Id}}-\mathcal{T})(y)\right\|^2\le\left\|x- y\right\|^2, \forall x, y \in \Omega$. For $\alpha\in (0,1)$, $\mathcal{T}$ is called $\alpha$-averaged if there exists a nonexpansive operator $\mathcal{T}$ such that $\mathcal{T}=(1-\alpha){\rm{Id}}+\alpha \mathcal{T}$. We use $\mathcal{A}(\alpha)$ to denote the $\alpha$-averaged operators. For $\beta\in\mathbb{R}^1_{+}$, $\mathcal{T}$ is called $\beta$-cocoercive if $\beta\mathcal{T}\in \mathcal{A}(\frac{1}{2})$.

\subsection{System Modeling}

Consider a radial distribution network with $(n+1)$ buses collected in the set $\mathcal{N}_0:=\{0\}\cup\mathcal{N}$, where $\mathcal{N}:=\{1,\cdots, n\}$ and bus 0 is the substation bus (slack bus) and is assumed to have a fixed voltage $U_0$. Lines are denoted by the set $\mathcal{E}:=\{(i, j)\}\subset\mathcal{N}\times\mathcal{N}$. 
Due to the tree topology, the cardinality of $|\mathcal{E}| = n$. Use $N_j$ and $N_j^2$ to denote the neighbors and two-hop neighbors of bus $j$ respectively.  

For bus $j$, use $U_j$ to denote its voltage magnitude. Use $p_j$ and $q_j$ to denote the active and reactive power generations respectively, which are controllable. $p_j^c$ and $q_j^c$ are active and reactive power loads, which are uncontrollable. For line $(i, j)\in \mathcal{E}$, use $r_{ij}$ and $x_{ij}$ to denote its line resistance and reactance. The active and reactive power from bus $i$ to $j$ is denoted by $P_{ij}$ and $Q_{ij}$ respectively. The linearized DistFlow equations  are given as \cite{baran1989optimal2,baran1989optimal,zhu2016fast}
\begin{subequations}
	\label{LinDistFlow}
	\setlength{\abovedisplayskip}{4pt}	
	\setlength{\belowdisplayskip}{4pt}
	\begin{align}
	\label{LinActive_Power}
	P_{ij}+p_j-p_j^c&=\sum\nolimits_{k \in  N_j }P_{jk}\\
	\label{LinReactive_Power}
	Q_{ij}+q_j-q_j^c&=\sum\nolimits_{k \in  N_j }Q_{jk} \\
	\label{LinV}
	U_i^2-U_j^2 &=r_{ij}P_{ij}+x_{ij}Q_{ij}
	\end{align}
\end{subequations}

The relative error of the linearization is very small, at the order of 1\% \cite{zhu2016fast}. 
Denote the incidence matrix of the network $(\mathcal{N}_0, \mathcal{E})$ by $\mathcal{M}\in\mathbb{R}^{(n+1)\times n}$. Moreover, use $\textbf{m}_0^T$ to denote the first row of $\mathcal{M}$, while the rest of the matrix is denoted by $\textbf{M}$. Define $V_i:=\frac{U_i^2}{2}$, and then the compact form of \eqref{LinDistFlow} is 
\begin{subequations}\label{compact_LinDistFlow}
	\setlength{\abovedisplayskip}{4pt}	
	\setlength{\belowdisplayskip}{4pt}
	\begin{align}
	\label{compact_LinActive_Power}
	\textbf{M}\textbf{P}&=\textbf{p}-\textbf{p}^c\\
	\label{compact_LinReactive_Power}
	\textbf{M}\textbf{Q}&=\textbf{q}-\textbf{q}^c \\
	\label{compact_LinV}
	[\textbf{m}_0,\ \textbf{M}^T]\cdot[V_0,\ \textbf{V}^T]^T &=\text{diag}(\textbf{r})\textbf{P}+\text{diag}(\textbf{x})\textbf{Q}
	\end{align}
\end{subequations}
where $\text{diag}(\textbf{r})$ is the diagonal matrix composed of $r_{ij}$, and similar is $\text{diag}(\textbf{x})$. 
As the network is connected, the rank of $\mathcal{M}$ is $n$. Thus, $\textbf{M}$ is of full rank and invertible \cite{zhu2016fast}. Finally, we have
\begin{equation}
\setlength{\abovedisplayskip}{1.5pt}	
\setlength{\belowdisplayskip}{1.5pt}
\textbf{V}=\textbf{R}\textbf{p}+\textbf{X}\textbf{q}-\textbf{M}^{-T}\textbf{m}_0V_0-\textbf{R}\textbf{p}^c-\textbf{X}\textbf{q}^c
\end{equation}
where $\textbf{R}=\textbf{M}^{-T}\text{diag}(\textbf{r})\textbf{M}^{-1}$ and $\textbf{X}=\textbf{M}^{-T}\text{diag}(\textbf{x})\textbf{M}^{-1}$. $\textbf{R}$ and $\textbf{X}$ are all symmetric positive definite matrices. By \cite{kekatos2015fast,kekatos2016voltage}, we know $-\textbf{M}^{-T}\textbf{m}_0=\textbf{1}_n$. Denote the inverse of $\textbf{X}$ by $\textbf{B}=\textbf{M}\text{diag}(\textbf{x}^{-1})\textbf{M}^{T}$, which is also positive definite. It is proved in \cite[Theorem 2]{zhou2018reverse} that $\textbf{B}=\textbf{L}+\text{diag}(\frac{1}{x_{0j}})$, where $\textbf{L}$ is the weighted Laplacian matrix of the subtree (i.e., without bus 0) and $x_{0j}$ is the reactance of the line connected to bus 0. If bus $j$ is not connected to the bus $0$ directly, $x_{0j}=\infty$.

If the distribution network lines have unified resistance-reactance ratio, i.e. there exists a constant $K=\frac{r_{ij}}{x_{ij}}, \forall (i, j)\in \mathcal{E}$, the network is called homogeneous. For a homogeneous network, we have $\textbf{R}=K\textbf{X}$ and $\textbf{B}\cdot\textbf{R}=K$. In the  analysis of this paper, it is assumed that the distribution network is homogeneous, which is true in the most cases in practice \cite{bolognani2015distributed, Tang2018Fast}. In the simulation, however, we also use the heterogeneous network to verify the performance of the controller. 

\section{Problem Formulation}

The optimal voltage control for a homogeneous network is formulated as
\begin{subequations}\label{eq_opt2}  
	\setlength{\abovedisplayskip}{4pt}	
	\setlength{\belowdisplayskip}{4pt}
	\begin{align}
	\min\limits_{\textbf{V}, \textbf{p}, \textbf{q}} &\quad   f=\frac{1}{2}\|\textbf{V}-\textbf{V}^o\|^2 + \sum\nolimits_{j\in\mathcal{N}}g_j(p_j,q_j) 
	\label{eq_opt2a}
	\\ 
	\text{s.t.}  
	&  \quad	\textbf{B}\textbf{V}=K\textbf{p}+\textbf{q}+\bm{\varpi}^s
	\label{eq_opt2b} \\
	&  \quad	\underline{{p}}_j\le {p}_j\le \overline{{p}}_j,\ \forall j\label{eq_opt2d} \\
	&  \quad	\underline{{q}}_j\le {q}_j\le \overline{{q}}_j,\ \forall j\label{eq_opt2e} \\
	&  \quad	0\le {p}_j^2+{q}_j^2\le s_j^2,\ \forall j\label{eq_opt2f} 
	\end{align}   
\end{subequations}
where $\bm{\varpi}^s=\textbf{B}(-\textbf{M}^{-T}\textbf{m}_0V_0-\textbf{R}\textbf{p}^c-\textbf{X}\textbf{q}^c)$. $\textbf{V}^o$ is the desired voltage profile and set as $\textbf{V}^o=0.5\times\textbf{1}_n$. $\underline{{p}}_j, \overline{{p}}_j$ are lower and upper bounds of ${p}_j$. $\underline{{q}}_j, \overline{{q}}_j$ are lower and upper bounds of ${q}_j$. $s_j$ is the apparent power capability of the inverter. 
The objective function is composed of two parts: voltage difference square, active and reactive power cost, for which we make an assumption.

\begin{assumption}\label{Lipschitz}
	The function $g_i(x)$ is convex, and $\nabla g_i(x)$ is $\vartheta$-Lipschitzian, i.e., for some $\vartheta>0$, $\|\nabla g_i(x_1)-\nabla g_i(x_2)\|\le \vartheta \|x_1-x_2\|, \forall x_1, x_2$.
\end{assumption}

For each bus $j$, the feasible region is defined as 
\begin{equation*}
\setlength{\abovedisplayskip}{4pt}	
\setlength{\belowdisplayskip}{4pt}
\Omega_j=\{\ (p_j, q_j)\ |\ p_j, q_j\ \text{satisfy}\ \eqref{eq_opt2d}, \eqref{eq_opt2e},\eqref{eq_opt2f}\ \}
\end{equation*}

The Lagrangian of \eqref{eq_opt2} is 
\begin{equation}\label{Lagrangian}
\setlength{\abovedisplayskip}{4pt}	
\setlength{\belowdisplayskip}{4pt}
\begin{split}
\mathop{\mathcal{L}(\textbf{V}, \textbf{p}, \textbf{q}, \bm{\lambda})}\limits_{(\textbf{p}, \textbf{q})\in{\Omega}} =\frac{1}{2}\|\textbf{V}-\textbf{V}^*\|^2 + \sum\nolimits_{j\in\mathcal{N}}g_j(p_j,q_j) \\
\quad + \bm{\lambda}^T(\textbf{B}\textbf{V}-K\textbf{p}-\textbf{q}-\bm{\varpi}^s)
\end{split}
\end{equation}
where $\Omega=\prod_{j\in \mathcal{N}} \Omega_j$.

\begin{remark}[Objective function]
	The item $g_j(p_j, q_j)$ in the objective function is more general compared with existing literature, which is only required to be convex and has a Lipschitzian gradient instead of strongly convex.
	If the objective function is formulated by a $\textbf{B}$-induced norm, i.e., $f=\frac{1}{2}\|\textbf{V}-\textbf{V}^o\|^2_\textbf{B}$, we can design a local controller, as  done in \cite{zhu2016fast, liu2017decentralized, bolognani2015distributed}.
\end{remark}

In the problem \eqref{eq_opt2}, we consider the regulation of both active and reactive power of DERs. The main motivations are two-folds: first, some DERs such as many small hydro plants have no capability of reactive power regulation; second, the distribution networks have comparable resistance and reactance. Thus, regulating both active and reactive power turns to be necessary in the voltage control of active distribution networks.

\section{Asynchronous Voltage Control}

In the asynchronous controller design, we adopt the partial primal-dual algorithm. Similar methods have been explored in \cite{Li:Connecting,wang2019distributed2} for a continuous-time setting and \cite{liu2018hybrid} for the discrete setting. However, \cite{liu2018hybrid} does not design an asynchronous algorithm. In a partial primal-dual algorithm, $\textbf{V}$ is obtained by solving the following problem.
\begin{equation}\label{partial}
\setlength{\abovedisplayskip}{4pt}	
\setlength{\belowdisplayskip}{4pt}
\textbf{V}_t=\arg\min_\textbf{V}\mathop{\mathcal{L}(\textbf{V}, \textbf{p}_t, \textbf{q}_t, \bm{\lambda}_t)}=-\textbf{B}^T{\bm{\lambda}}_{t}+{\textbf{V}}^o
\end{equation}
Define $\bm{\varpi}^a=\bm{\varpi}^s-\textbf{B}{\textbf{V}}^o$. 
Each bus has its own iteration number $t_j$, implying that a \emph{local clock} is used. Then, various types of asynchrony can be considered as time intervals between two iterations.
At $t_j$, bus $j$ computes in the following way, which has the form of the Krasnosel'ski{\v{i}}-Mann iteration.
\begin{subequations}\label{inertia_algorithm_async} 
	\setlength{\abovedisplayskip}{4pt}	
	\setlength{\belowdisplayskip}{4pt}
	\begin{align}
	&\left[ {\begin{array}{*{20}{c}}
		\tilde p_{j,t_j}\\
		\tilde q_{j,t_j}
		\end{array}} \right]=\mathcal{P}_{\Omega_j}\left[ {\left( {\begin{array}{*{20}{c}}
			p_{j,t_j-\tau_{j}^{t_j}}\\
			q_{j,t_j-\tau_{j}^{t_j}}
			\end{array}} \right)}\right.\nonumber\\
	&\qquad\quad\quad\left.-\alpha_{pq} 
	\left( {\begin{array}{*{20}{c}}
		\frac{\partial g_j}{\partial p_j }( p_{j,t_j-\tau_{j}^{t_j}}, q_{j,t_j-\tau_{j}^{t_j}})-K \lambda_{j,t_j-\tau_{j}^{t_j}}\\
		\frac{\partial g_j}{\partial q_j }(p_{j,t_j-\tau_{j}^{t_j}}, q_{j,t_j-\tau_{j}^{t_j}})- \lambda_{j,t_j-\tau_{j}^{t_j}}
		\end{array}} \right) \right] \label{ASYN1}\\
	&\tilde\lambda_{j,t_j}=\lambda_{j,t_j-\tau_{j}^{t_j}} + \alpha_\lambda\left(-\sum\nolimits_{k\in N_j\cup N_j^2}\tilde B_{jk} \lambda_{k,t_k-\tau_{k}^{t_k}}\right.\nonumber\\
	&\qquad\quad \left.-2K\tilde p_{j,t_j}-2\tilde q_{j,t_j} +K p_{j,t_j-\tau_{j}^{t_j}}+ q_{j,t_j-\tau_{j}^{t_j}}-\varpi_j^a\right)\label{ASYN2}\\
	&{\lambda}_{j,t_j+1}={\lambda}_{j,t_j-\tau_{j}^{t_j}}+\eta(\tilde{\lambda}_{j,t_j}-{\lambda}_{j,t_j-\tau_{j}^{t_j}}) \label{ASYN3}\\
	&{p}_{j,t_j+1}={p}_{j,t_j-\tau_{j}^{t_j}}+\eta(\tilde{p}_{j,t_j}-{p}_{j,t_j-\tau_{j}^{t_j}})\label{ASYN4} \\		
	&{q}_{j,t_j+1}={q}_{j,t_j-\tau_{j}^{t_j}}+\eta(\tilde{q}_{j,t_j}-{q}_{j,t_j-\tau_{j}^{t_j}}) \label{ASYN5}\\		
	&{{V}}_{j,t_j+1}= -\sum\nolimits_{k\in\mathcal{N}_j}{B}_{jk}{{\lambda}}_{j,t_j+1}+{{V}_j}^o \label{ASYN6}
	\end{align}
\end{subequations}
where $\varpi_j^a$ is the $j$-th component of $\bm{\varpi}^a$ and stepsizes $\eta, \alpha_{pq}, \alpha_\lambda >0$. $\tilde B_{jk}$ is the $j$th row and $k$th column element of matrix $\tilde{\textbf{B}}=\textbf{B}^2$. As $\textbf{B}$ has the same sparse structure with Laplacian of the subtree, the matrix $\textbf{B}^2$ has nonzero entries matching the neighbors and two-hop neighbors of each bus. This implies that each bus only needs the information of its neighbors and two-hop neighbors to compute the variable $\tilde\lambda_{j,t_j}$. The asynchronous distributed  voltage control (ASDVC) algorithm based on \eqref{inertia_algorithm_async} is given in Algorithm \ref{algorithm1}. 

\begin{algorithm}
	\caption{\textit{ASDVC}}
	\label{algorithm1}
	\begin{flushleft}
		\textbf{Input:} For bus $j$, the input is $(p_{j,0}, q_{j,0})\in\Omega_j$,  $\lambda_{j,0}\in \mathbb{R}$.
		
		
		\textbf{Iteration at} \textit{$t_j$}: Suppose bus $j$'s clock ticks at time $t_j$, then bus $j$ is activated and updates its local variables as follows: 
		
		$\ \ $\textbf{Step 1: }\textbf{Reading phase}
		
		$\quad$ Get $\lambda_{k,t_k-\tau_{k}^{t_k}}, k\in N_j\cup N_j^2$ from its neighbors' and two-hop neighbors' output cache. 
		
		$\ \ $\textbf{Step 2:} \textbf{Computing phase}
		
		$\quad$ Calculate $\tilde p_{j,t_j}$, $\tilde q_{j,t_j}$ and $\tilde\lambda_{j,t_j}$ according to \eqref{ASYN1} and \eqref{ASYN2} respectively.
		
		$\quad$ Update ${\lambda}_{j,t_j+1}$, ${p}_{j,t_j+1}$, ${q}_{j,t_j+1}$ and ${{V}}_{j, t_j+1}$ according to \eqref{ASYN3} $-$ \eqref{ASYN6} respectively.
		
		$\ \ $\textbf{Step 3:} \textbf{Writing phase}
		
		$\quad$ Write ${\lambda}_{j,t_j+1}$ to its output cache and ${p}_{j,t_j+1}$, ${q}_{j,t_j+1}$, ${{V}}_{j, t_j+1}$ to its local storage. Increase $t_j$ to $t_j+1$.
	\end{flushleft}
\end{algorithm}
\begin{remark}[Asynchronous update]
	The main difference between this paper and \cite{liu2018hybrid} is that we design the asynchronous pattern for the partial primal-dual algorithm. It should be noted that this is not trivial. As proved in \cite{hale2017asynchronous}, the asynchronous distributed primal-dual algorithm cannot guarantee the convergence if dual variables are not updated simultaneously. In ASDVC, there is no need to update $\lambda$ simultaneously. To this end, we use neighbors' and two-hop neighbors' information. 
\end{remark}

\section{Optimality and Convergence}
In this section, we formulate the algorithm \eqref{inertia_algorithm_async} into a fixed-point iteration problem using operator-splitting method. Then,  its convergence and optimality of the equilibrium are proved.
\subsection{Algorithm Reformulation}
Define $z_j=\text{col}(p_j, q_j)$, $\tilde z_j=\text{col}(\tilde p_j, \tilde q_j)$. If the time delay is not considered, the compact form of \eqref{inertia_algorithm_async} can be obtained, denoted by SDVC. As $\textbf{V}$ is not in the iteration process, we omit it here.
\begin{subequations}\label{inertia_algorithm_compact_async} 
\setlength{\abovedisplayskip}{4pt}	
\setlength{\belowdisplayskip}{4pt}
	\begin{align}
	&\tilde{\textbf{z}}_{t}=\mathcal{P}_\Omega\left({\textbf{z}}_{t}-\alpha_{pq}(\nabla_{ z_t}g(z_t)-\text{col}(K{\bm{\lambda}}_{t}, {\bm{\lambda}}_{t})\right)\label{dis_algorithmc_async2}\\
	&\tilde{\bm{\lambda}}_{t}={\bm{\lambda}}_{t} + \alpha_\lambda\left(-\textbf{B}^2 {\bm{\lambda}}_{t} -2(K\cdot I_n,I_n)\tilde{\textbf{z}}_{t} + (K\cdot I_n,I_n){\textbf{z}}_t -\bm{\varpi}^a\right)\label{dis_algorithmc_async3}\\
	&\textbf{z}_{t+1}={\textbf{z}}_{t}+\eta(\tilde{\textbf{z}}_{t}-{\textbf{z}}_{t}) \label{dis_algorithmc_async5}\\
	&\bm{\lambda_{t+1}}=\bm{\lambda}_{t}+\eta(\tilde{\bm{\lambda}}_{t}-\bm{\lambda}_{t}) \label{dis_algorithmc_async4}
	\end{align}
\end{subequations}
In the rest of the paper, denote $F(\textbf{z})=\nabla_\textbf{z}g({\textbf{z}})$.
Equations \eqref{dis_algorithmc_async2}-\eqref{dis_algorithmc_async3} are equivalent to \footnote{The ``=" in \eqref{eqivalent1} is substituted by ``$\in$" in some literature. Here, we still use ``=" for the notation consistence if there is no confusion. }
\begin{subequations}\label{inertia_algorithm_compact_async2}
	\setlength{\abovedisplayskip}{4pt}	
	\setlength{\belowdisplayskip}{4pt}
	\begin{align}\label{eqivalent1}
	&-F({\textbf{z}}_t)=N_{\Omega}(\tilde{\textbf{z}}_{t})-\text{col}(K\tilde{\bm{\lambda}}_{t}, \tilde{\bm{\lambda}}_{t}) \nonumber\\
	&\quad\quad\quad\quad\quad\quad + \alpha_{pq}^{-1}(\tilde{\textbf{z}}_{t}-\textbf{z}_{t}) +(K\cdot I_n,I_n)^T(\tilde{\bm{\lambda}}_{t}-\bm{\lambda}_{t})\\
	\label{eqivalent2}
	&-\bm{\varpi}^a-\textbf{B}^2{\bm{\lambda}}_{t}=(K\cdot I_n,I_n)\tilde{\textbf{z}}_{t}+(K\cdot I_n,I_n) (\tilde{\textbf{z}}_{t}-\textbf{z}_{t})\nonumber\\
	&\quad\quad\quad\quad\quad\quad +\alpha_{\lambda}^{-1}(\tilde{\bm{\lambda}}_{t}-\bm{\lambda}_{t})
	\end{align}
\end{subequations}

Define following two operators
\begin{subequations}
\setlength{\abovedisplayskip}{4pt}	
\setlength{\belowdisplayskip}{4pt}
\begin{align}
\label{OperB2}
&\mathcal{C}:\left[ {\begin{array}{*{20}{c}}
	{\textbf{z}}\\
	{\bm{\lambda}}
	\end{array}} \right] \mapsto \left[ {\begin{array}{*{20}{c}}
	F({\textbf{z}})\\
	\bm{\varpi}^a+\textbf{B}^2{\bm{\lambda}}
	\end{array}} \right] \\
\label{OperU2}
&\mathcal{D}:\left[ {\begin{array}{*{20}{c}}
	{\textbf{z}}\\
	{\bm{\lambda}}
	\end{array}} \right] \mapsto \left[ {\begin{array}{*{20}{c}}
	N_{\Omega}(\textbf{z})-\text{col}(K\bm{\lambda}, \bm{\lambda})\\
	(K\cdot I_n,I_n)\textbf{z}
	\end{array}} \right]
\end{align}
\end{subequations}
and denote $\textbf{w}_t=\text{col}(\textbf{z}_t, \bm{\lambda}_t)$ and $\tilde{\textbf{w}}_t=\text{col}(\tilde{\textbf{z}_t},  \tilde{\bm{\lambda}}_t)$.

Then, \eqref{inertia_algorithm_compact_async2} can be rewritten as 
\begin{equation}
\setlength{\abovedisplayskip}{4pt}	
\setlength{\belowdisplayskip}{4pt}
-\mathcal{C}({\textbf{w}}_t)=\mathcal{D}(\tilde{\textbf{w}}_{t})+\Gamma\cdot(\tilde{\textbf{w}}_{t}-{\textbf{w}}_t)
\end{equation}
where 
\begin{equation}
\setlength{\abovedisplayskip}{4pt}	
\setlength{\belowdisplayskip}{4pt}
\Gamma:=\left[ {\begin{array}{*{20}{c}}
	\alpha_{pq}^{-1}I_{2n}&(K\cdot I_n,I_n)^T\\
	(K\cdot I_n,I_n) &\alpha_\lambda^{-1}I_n
	\end{array}} \right]
\end{equation}
Here, $ \alpha_{pq}, \alpha_\lambda$ are chosen to make $\Gamma$ is positive definite.

Denote the maximal and minimal eigenvalues of $\textbf{B}$ by $\sigma_{\max}$ and $\sigma_{\min}$ respectively. We have the following result.

\begin{lemma}\label{Lemma5}
	In terms of $\mathcal{C}$ and $\mathcal{D}$, we have following properties. 
	\begin{enumerate}[1)]
		\item Operator $\mathcal{C}$ is $\beta$-cocoercive under the 2-norm  with $0<\beta\le\min\{\frac{1}{\sigma_{\max}^2},\frac{1}{\vartheta}\}$; 
		\item Operator $\mathcal{D}$ is maximally monotone;
		\item $\Gamma^{-1}\mathcal{D}$ is maximally monotone under the $\Gamma$-induced norm \\$\|\cdot\|_{\Gamma}$;
		\item $({\rm{Id}}+\Gamma^{-1}\mathcal{D})^{-1}$ exists and is firmly nonexpansive.
	\end{enumerate}
\end{lemma}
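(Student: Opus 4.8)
The plan is to handle the four claims in order, exploiting the block structure of $\mathcal{C}$ and the additive (normal-cone plus skew-symmetric) structure of $\mathcal{D}$, and then to transport monotonicity into the $\Gamma$-metric so that standard resolvent theory delivers the last claim.

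For claim 1, I would note that $\mathcal{C}$ is block-diagonal: the $\textbf{z}$-block is the gradient map $F=\nabla_\textbf{z}g$ and the $\bm{\lambda}$-block is the affine map $\bm{\lambda}\mapsto\bm{\varpi}^a+\textbf{B}^2\bm{\lambda}$. By Assumption~\ref{Lipschitz}, $g$ is convex with $\vartheta$-Lipschitz gradient, so the Baillon--Haddad theorem yields that $F$ is $\tfrac{1}{\vartheta}$-cocoercive. For the affine block, since $\textbf{B}$ is symmetric positive definite so is $\textbf{B}^2$; writing $d=\bm{\lambda}_1-\bm{\lambda}_2$, cocoercivity with constant $\beta$ is exactly $d^T\textbf{B}^2d\ge\beta\,d^T\textbf{B}^4d$, i.e. $\textbf{B}^2\succeq\beta\textbf{B}^4$, which holds iff $\beta\le 1/\sigma_{\max}^2$. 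Because the two blocks act on disjoint variables, the inner products and squared norms split exactly, so adding the two cocoercivity inequalities shows $\mathcal{C}$ inherits the smaller of the two constants, giving $\beta\le\min\{1/\sigma_{\max}^2,1/\vartheta\}$.

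For claim 2, I would split $\mathcal{D}=\mathcal{D}_1+\mathcal{D}_2$, where $\mathcal{D}_1:(\textbf{z},\bm{\lambda})\mapsto(N_\Omega(\textbf{z}),0)$ and $\mathcal{D}_2:(\textbf{z},\bm{\lambda})\mapsto(-\text{col}(K\bm{\lambda},\bm{\lambda}),(K I_n,I_n)\textbf{z})$. The operator $\mathcal{D}_1$ is the normal-cone map of the closed convex set $\Omega\times\mathbb{R}^n$, hence the subdifferential of a proper closed convex indicator function, and therefore maximally monotone. The key observation is that $\mathcal{D}_2$ is the linear operator with matrix $\left[\begin{array}{cc}0&-H^T\\H&0\end{array}\right]$, $H:=(K I_n,I_n)$, so that $H^T\bm{\lambda}=\text{col}(K\bm{\lambda},\bm{\lambda})$; this matrix is skew-symmetric, whence $\langle\mathcal{D}_2\textbf{w},\textbf{w}\rangle=0$ for all $\textbf{w}$, so $\mathcal{D}_2$ is monotone and, being bounded, single-valued and everywhere defined on $\mathbb{R}^{3n}$, maximally monotone. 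Since $\mathcal{D}_2$ has full domain, the sum theorem for maximally monotone operators applies with no further constraint qualification, so $\mathcal{D}=\mathcal{D}_1+\mathcal{D}_2$ is maximally monotone.

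For claims 3 and 4, I would pass to the Hilbert space $(\mathbb{R}^{3n},\langle\cdot,\cdot\rangle_\Gamma)$ with $\langle x,y\rangle_\Gamma=\langle\Gamma x,y\rangle$, a genuine inner product because $\Gamma$ is symmetric positive definite. For any $(x,u),(y,v)\in\text{gra}\,\mathcal{D}$ one has $\langle\Gamma^{-1}u-\Gamma^{-1}v,x-y\rangle_\Gamma=\langle u-v,x-y\rangle\ge 0$, so $\Gamma^{-1}\mathcal{D}$ is monotone in the $\Gamma$-metric; maximality transfers because the invertible map $u\mapsto\Gamma^{-1}u$ identifies $\text{gra}\,\mathcal{D}$ with $\text{gra}(\Gamma^{-1}\mathcal{D})$, so any strictly larger $\Gamma$-monotone extension of $\Gamma^{-1}\mathcal{D}$ would pull back to a strictly larger monotone extension of $\mathcal{D}$, contradicting claim 2. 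This gives claim 3, and claim 4 then follows from Minty's theorem: the resolvent of a maximally monotone operator on a Hilbert space is single-valued, everywhere defined and firmly nonexpansive, so $({\rm{Id}}+\Gamma^{-1}\mathcal{D})^{-1}$ exists and is firmly nonexpansive in $\|\cdot\|_\Gamma$.

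The main obstacle I anticipate is operator-theoretic bookkeeping rather than any deep difficulty: specifically, verifying the skew-symmetry of $\mathcal{D}_2$ with the correct adjoint $H^T=(K I_n,I_n)^T$ so that the coupling term genuinely cancels, and confirming that \emph{maximal} monotonicity (not merely monotonicity) survives the change of metric. Once these are settled, claim 4 is immediate, and I would cite \cite{bauschke2011convex} for the Baillon--Haddad theorem, the maximal-monotone sum theorem, and Minty's theorem.
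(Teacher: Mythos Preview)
Your proposal is correct and follows essentially the same route as the paper: block-wise cocoercivity via Baillon--Haddad for $F$ and a Lipschitz/eigenvalue bound for $\textbf{B}^2\bm{\lambda}$, the splitting of $\mathcal{D}$ into a normal-cone part plus a skew-symmetric linear part, transfer of maximal monotonicity to the $\Gamma$-metric, and Minty's theorem for the resolvent. Your treatment is in fact slightly more careful than the paper's in two places---you invoke the sum theorem explicitly (noting that the skew-symmetric part has full domain) rather than simply asserting that the sum is maximally monotone, and you spell out the graph-bijection argument for why \emph{maximality} survives the change of metric, whereas the paper defers this to an external reference.
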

\begin{proof}
	1): According to the definition of $\mathcal{C}$ and the definition of $\beta$-cocoercive, it suffice to prove that $\left\langle \mathcal{C}(\textbf{w}_1)-\mathcal{C}(\textbf{w}_2),\textbf{w}_1-\textbf{w}_2 \right\rangle \ge \beta \left\|\mathcal{C}(\textbf{w}_1)-\mathcal{C}(\textbf{w}_2) \right\|^2 $, or equivalently
	\begin{equation}\label{cocoercive_B}
	\setlength{\abovedisplayskip}{4pt}	
	\setlength{\belowdisplayskip}{4pt}
	\begin{split}
	\left(F(\textbf{z}_1)-F(\textbf{z}_2)\right)^T(\textbf{z}_1 - \textbf{z}_2) + ({\bm{\lambda}_1} - {\bm{\lambda}_2})^T\textbf{B}^2({\bm{\lambda}_1} - {\bm{\lambda}_2}) \ge \\
	\qquad\beta (\left\|\textbf{B}^2{\bm{\lambda}_1} - \textbf{B}^2 {\bm{\lambda}_2}\right\|^2+\left\|F(\textbf{z}_1)-F(\textbf{z}_2)\right\|^2)
	\end{split}
	\end{equation}
	
	Notice that $\bm{\varpi}^a+\textbf{B}^2{\bm{\lambda}}$ is the gradient of function $\hat f({\bm{\lambda}})=\frac{1}{2}{\bm{\lambda}}^T\textbf{B}^2{\bm{\lambda}}+{\bm{\lambda}}^T\bm{\varpi}^a$. As $\nabla^2\hat f({\bm{\lambda}})=\textbf{B}^2>0$, $\hat f({\bm{\lambda}})$ is a convex function. For its gradient, we have 
	\begin{equation}
	\setlength{\abovedisplayskip}{4pt}	
	\setlength{\belowdisplayskip}{4pt}
	\left\|\textbf{B}^2({\bm{\lambda}_1} - {\bm{\lambda}_2})\right\| \le \left\|\textbf{B}^2\right\|\left\|{\bm{\lambda}_1} - {\bm{\lambda}_2}\right\| = \sigma_{\max}^2\left\|{\bm{\lambda}_1} - {\bm{\lambda}_2}\right\|
	\end{equation}
	Thus, $\nabla\hat f({\bm{\lambda}})$ is $\sigma_{\max}^2$-Lipschitzian. Then, 
	$\nabla\hat f({\bm{\lambda}})= \bm{\varpi}^a+\textbf{B}^2{\bm{\lambda}}$ is $\frac{1}{\sigma_{\max}^2}$-cocoercive \cite[Corollary 18.16]{bauschke2011convex}, i.e.,
	\begin{equation}\label{cocoercive_C}
	\setlength{\abovedisplayskip}{4pt}	
	\setlength{\belowdisplayskip}{4pt}
	({\bm{\lambda}_1} - {\bm{\lambda}_2})^T\textbf{B}^2({\bm{\lambda}_1} - {\bm{\lambda}_2}) \ge \frac{1}{\sigma_{\max}^2} \left\|\textbf{B}^2{\bm{\lambda}_1} - \textbf{B}^2 {\bm{\lambda}_2}\right\|^2
	\end{equation}

	Moreover, since $F$ is $\frac{1}{\vartheta}$-cocoercive, i.e.,
	\begin{equation}\label{cocoercive_F}
	\setlength{\abovedisplayskip}{4pt}	
	\setlength{\belowdisplayskip}{4pt}
	\left(F(\textbf{z}_1)-F(\textbf{z}_2)\right)^T(\textbf{z}_1 - \textbf{z}_2)\ge \frac{1}{\vartheta}\left\|F(\textbf{z}_1)-F(\textbf{z}_2)\right\|^2
	\end{equation}
	Combining \eqref{cocoercive_C}, \eqref{cocoercive_F} and taking $0<\beta\le\min\{\frac{1}{\sigma_{\max}^2},\frac{1}{\vartheta}\}$, we can get the first assertion. 
	
	2): The operator $\mathcal{D}$ can be rewritten as 
	\begin{eqnarray}
	\setlength{\abovedisplayskip}{3pt}	
	\setlength{\belowdisplayskip}{3pt}
	\begin{split}
		\mathcal{D}&=\left[ {\begin{array}{*{20}{c}}
		0& -(K I_n,I_n)^T\\
		(K I_n,I_n) &0
		\end{array}} \right]\left[ {\begin{array}{*{20}{c}}
		{\textbf{z}}\\
		{\bm{\lambda}}
		\end{array}} \right]+\left[ {\begin{array}{*{20}{c}}
		N_{\Omega}(\textbf{z})\\
		0
		\end{array}} \right]\\
	&=\mathcal{D}_1+\mathcal{D}_2
	\end{split}
	\end{eqnarray}
	As $\mathcal{D}_1$ is a skew-symmetric matrix, $\mathcal{D}_1$ is maximally monotone \cite[Example 20.30]{bauschke2011convex}. Moreover, $N_{\Omega}(\textbf{z})$ and $0$ are all maximally monotone \cite[Example 20.41]{bauschke2011convex}, so $\mathcal{D}_2$ is also maximally monotone. Thus, $\mathcal{D}=\mathcal{D}_1+\mathcal{D}_2$ is maximally monotone.
	
	3) As $\Gamma$ is symmetric positive definite and $\mathcal{D}$ is maximally monotone, we can prove that $\Gamma^{-1}\mathcal{D}$ is maximally monotone by the similar analysis in Lemma 5.6 of \cite{yi2017distributed}.  
	
	4) As $\Gamma^{-1}\mathcal{D}$ is maximally monotone, $({\rm{Id}}+\Gamma^{-1}\mathcal{D})^{-1}$ exists and is firmly nonexpansive by \cite[Proposition 23.7]{bauschke2011convex}. 
\end{proof}

By the last assertion of Lemma \ref{Lemma5}, \eqref{inertia_algorithm_compact_async} is equivalent to 
\begin{subequations}\label{Opera2}
	\setlength{\abovedisplayskip}{4pt}	
	\setlength{\belowdisplayskip}{4pt}
	\begin{align}
	\tilde{\textbf{w}}_t&=({\rm{Id}}+\Gamma^{-1}\mathcal{D})^{-1}({\rm{Id}}-\Gamma^{-1}\mathcal{C}){\textbf{w}}_t \\
	\textbf{w}_{t+1}&={\textbf{w}}_t+\eta(\tilde{\textbf{w}}_t-{\textbf{w}}_{t})
	\end{align}
\end{subequations}

Denote $\mathcal{S}_1=({\rm{Id}}+\Gamma^{-1}\mathcal{D})^{-1}$, $\mathcal{S}_2=({\rm{Id}}-\Gamma^{-1}\mathcal{C})$ and $\mathcal{S}=\mathcal{S}_1\mathcal{S}_2$, and then we have following results.
\begin{lemma}\label{Lemma6}
	Take $0<\beta\le\min\{\frac{1}{\sigma_{\max}^2},\frac{1}{\vartheta}\}$, $\kappa>\frac{1}{2\beta}$, and the step sizes  $\alpha_{pq}, \alpha_{\lambda}$ such that $\Gamma-\kappa I$ is positive semi-definite. Following results are true under the $\Gamma$-induced norm $\|\cdot\|_{\Gamma}$.
	\begin{enumerate}[1)]
		\item $\mathcal{S}_1$ is a $\frac{1}{2}$-averaged operator, i.e., $\mathcal{S}_1 \in \mathcal{A}\left(\frac{1}{2}\right)$;
		\item $\mathcal{S}_2$ is a $\frac{1}{2\beta\kappa}$-averaged operator, i.e., $\mathcal{S}_2 \in \mathcal{A}\left(\frac{1}{2\beta\kappa}\right)$;
		\item $\mathcal{S}$ is a $\frac{2\kappa\beta}{4\kappa \beta-1}$-averaged operator, i.e., $\mathcal{S} \in \mathcal{A}\left(\frac{2\kappa\beta}{4\kappa \beta-1}\right)$. 
	\end{enumerate} 
\end{lemma}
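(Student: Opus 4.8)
The plan is to dispatch the three claims in order, working throughout in the $\Gamma$-induced inner product $\left\langle x,y\right\rangle_\Gamma=\left\langle \Gamma x,y\right\rangle$ rather than the $2$-norm, since each assertion feeds the next. Claim 1) is essentially immediate: Lemma \ref{Lemma5}.4) already gives that $\mathcal{S}_1=({\rm{Id}}+\Gamma^{-1}\mathcal{D})^{-1}$ is firmly nonexpansive under $\|\cdot\|_\Gamma$, and firm nonexpansiveness is by definition the same as being $\tfrac12$-averaged (write $\mathcal{S}_1=\tfrac12{\rm{Id}}+\tfrac12(2\mathcal{S}_1-{\rm{Id}})$ and observe $2\mathcal{S}_1-{\rm{Id}}$ is nonexpansive). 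Hence $\mathcal{S}_1\in\mathcal{A}(\tfrac12)$.

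The real work is claim 2). The key step is to upgrade the $2$-norm cocoercivity of $\mathcal{C}$ from Lemma \ref{Lemma5}.1) to a $\Gamma$-norm cocoercivity of $\Gamma^{-1}\mathcal{C}$. Writing $c=\mathcal{C}(\textbf{w}_1)-\mathcal{C}(\textbf{w}_2)$ and $d=\textbf{w}_1-\textbf{w}_2$, I would compute $\left\langle \Gamma^{-1}\mathcal{C}(\textbf{w}_1)-\Gamma^{-1}\mathcal{C}(\textbf{w}_2),\,d\right\rangle_\Gamma=\left\langle c,d\right\rangle$ and $\|\Gamma^{-1}c\|_\Gamma^2=\left\langle c,\Gamma^{-1}c\right\rangle\le \tfrac1\kappa\|c\|^2$, where the last inequality uses $\Gamma^{-1}\preceq\tfrac1\kappa I$, i.e.\ $\lambda_{\min}(\Gamma)\ge\kappa$, which is exactly the chosen condition $\Gamma-\kappa I\succeq 0$. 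Together with $\left\langle c,d\right\rangle\ge\beta\|c\|^2$ this yields $\left\langle \Gamma^{-1}\mathcal{C}(\textbf{w}_1)-\Gamma^{-1}\mathcal{C}(\textbf{w}_2),\,d\right\rangle_\Gamma\ge \beta\kappa\,\|\Gamma^{-1}c\|_\Gamma^2$, so $\Gamma^{-1}\mathcal{C}$ is $\beta\kappa$-cocoercive under $\|\cdot\|_\Gamma$. Then I would invoke the standard fact that for a $\mu$-cocoercive operator $\mathcal{A}$ and $\gamma\in(0,2\mu)$ the map ${\rm{Id}}-\gamma\mathcal{A}$ is $\tfrac{\gamma}{2\mu}$-averaged; taking $\mathcal{A}=\Gamma^{-1}\mathcal{C}$, $\mu=\beta\kappa$, $\gamma=1$ gives $\mathcal{S}_2={\rm{Id}}-\Gamma^{-1}\mathcal{C}\in\mathcal{A}(\tfrac{1}{2\beta\kappa})$. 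Concretely this reduces to expanding $\|\mathcal{S}_2\textbf{w}_1-\mathcal{S}_2\textbf{w}_2\|_\Gamma^2=\|d-\Gamma^{-1}c\|_\Gamma^2=\|d\|_\Gamma^2-2\left\langle c,d\right\rangle+\|\Gamma^{-1}c\|_\Gamma^2$ and checking the averaged-operator characterization $\|\mathcal{S}_2\textbf{w}_1-\mathcal{S}_2\textbf{w}_2\|_\Gamma^2\le\|d\|_\Gamma^2-\tfrac{1-\alpha}{\alpha}\|\Gamma^{-1}c\|_\Gamma^2$ with $\alpha=\tfrac{1}{2\beta\kappa}$. The hypothesis $\kappa>\tfrac{1}{2\beta}$ is precisely what forces $\alpha\in(0,1)$ so the averagedness constant is admissible.

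For claim 3) I would apply the standard composition rule for averaged operators: if $\mathcal{S}_1\in\mathcal{A}(\alpha_1)$ and $\mathcal{S}_2\in\mathcal{A}(\alpha_2)$ then $\mathcal{S}_1\mathcal{S}_2\in\mathcal{A}(\alpha)$ with $\alpha=\tfrac{\alpha_1+\alpha_2-2\alpha_1\alpha_2}{1-\alpha_1\alpha_2}$ (see, e.g., \cite{bauschke2011convex}). Substituting $\alpha_1=\tfrac12$ and $\alpha_2=\tfrac{1}{2\beta\kappa}$ collapses the numerator, since $\tfrac12+\tfrac{1}{2\beta\kappa}-2\cdot\tfrac12\cdot\tfrac{1}{2\beta\kappa}=\tfrac12$, and after simplifying the denominator $1-\tfrac{1}{4\beta\kappa}$ one obtains $\alpha=\tfrac{2\kappa\beta}{4\kappa\beta-1}$, exactly the claimed constant.

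I expect the main obstacle to be the passage in claim 2) from the $2$-norm cocoercivity of $\mathcal{C}$ to the $\Gamma$-norm cocoercivity of $\Gamma^{-1}\mathcal{C}$: the entire argument hinges on the eigenvalue bound $\lambda_{\min}(\Gamma)\ge\kappa$ coming from $\Gamma-\kappa I\succeq 0$, and on carefully distinguishing the two inner products when moving factors of $\Gamma$ and $\Gamma^{-1}$ across $\left\langle\cdot,\cdot\right\rangle$ versus $\left\langle\cdot,\cdot\right\rangle_\Gamma$. Once that $\beta\kappa$-cocoercivity is in hand, claims 1) and 3) are routine bookkeeping on top of the standard operator-theoretic facts about firmly nonexpansive maps and compositions of averaged operators.
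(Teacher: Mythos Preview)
Your proposal is correct and follows essentially the same route as the paper: claim 1) via firm nonexpansiveness from Lemma~\ref{Lemma5}.4), claim 2) by lifting the $2$-norm $\beta$-cocoercivity of $\mathcal{C}$ to $\beta\kappa$-cocoercivity of $\Gamma^{-1}\mathcal{C}$ in $\|\cdot\|_\Gamma$ using $\Gamma^{-1}\preceq\tfrac{1}{\kappa}I$, and claim 3) by the composition formula $\alpha=\tfrac{\alpha_1+\alpha_2-2\alpha_1\alpha_2}{1-\alpha_1\alpha_2}$. The only cosmetic difference is that, once $\beta\kappa$-cocoercivity is in hand, the paper explicitly writes $\beta\kappa\Gamma^{-1}\mathcal{C}=\tfrac12{\rm Id}+\tfrac12\tilde{\mathcal{S}}$ and reads off $\mathcal{S}_2=(1-\tfrac{1}{2\beta\kappa}){\rm Id}+\tfrac{1}{2\beta\kappa}(-\tilde{\mathcal{S}})$, whereas you invoke the equivalent standard fact that ${\rm Id}-\gamma\mathcal{A}\in\mathcal{A}(\tfrac{\gamma}{2\mu})$ for $\mu$-cocoercive $\mathcal{A}$.
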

\begin{proof}
	1): From the assertion 4) of Lemma \ref{Lemma5}, $\mathcal{S}_1=({\rm{Id}}+\Gamma^{-1}\mathcal{D})^{-1}$ is firmly nonexpansive, implying $\mathcal{S}_1\in\mathcal{A}(\frac{1}{2})$.
	
	2): First, we prove that $\Gamma^{-1}\mathcal{C}$ is $\beta\kappa$-cocoercive, i.e., 
	\begin{equation}\label{cocoercive1}
	\setlength{\abovedisplayskip}{4pt}	
	\setlength{\belowdisplayskip}{4pt}
	\begin{split}
	\left\langle\Gamma^{-1}\mathcal{C}(\textbf{w}_1)-\Gamma^{-1}\mathcal{C}(\textbf{w}_2), \textbf{w}_1-\textbf{w}_2\right\rangle_{\Gamma}\ge\\
	\qquad\beta\kappa\left\|\Gamma^{-1}\mathcal{C}(\textbf{w}_1)-\Gamma^{-1}\mathcal{C}(\textbf{w}_2)\right\|^2_{\Gamma}
	\end{split}
	\end{equation}
	Denote the maximal and minimal eigenvalues of $\Gamma$ by $\delta_{\max}$ and $\delta_{\min}$ respectively, and we have $\delta_{\max}\ge\delta_{\min}\ge\kappa>0$. Moreover, the Euclidean norms of $\Gamma$ and $\Gamma^{-1}$ are $\left\|\Gamma\right\|_2=\delta_{\max}$ and $\left\|\Gamma^{-1}\right\|_2=\frac{1}{\delta_{\min}}$ \cite[Proposition 5.2.7, 5.2.8]{meyer2000matrix}. 
	
	For the right hand side of \eqref{cocoercive1}, we have
	\begin{align}\label{cocoercive2}
	&\beta\kappa\left\|\Gamma^{-1}\mathcal{C}(\textbf{w}_1)-\Gamma^{-1}\mathcal{C}(\textbf{w}_2)\right\|^2_{\Gamma}=\beta\kappa\left\|\mathcal{C}(\textbf{w}_1)-\mathcal{C}(\textbf{w}_2)\right\|^2_{\Gamma^{-1}}\nonumber\\
	&\qquad=\beta\kappa(\mathcal{C}(\textbf{w}_1)-\mathcal{C}(\textbf{w}_2))^T\Gamma^{-1}(\mathcal{C}(\textbf{w}_1)-\mathcal{C}(\textbf{w}_2))\nonumber\\
	&\qquad\le\beta\kappa\left\|\Gamma^{-1}\right\|_2\left\|\mathcal{C}(\textbf{w}_1)-\mathcal{C}(\textbf{w}_2)\right\|^2_{2}\nonumber\\
	&\qquad\le\beta\kappa\cdot\frac{1}{\kappa}\left\|\mathcal{C}(\textbf{w}_1)-\mathcal{C}(\textbf{w}_2)\right\|^2_{2}
	\end{align}
	where the first "$\le$" is due to the Cauchy-Schwarz  inequality and the second is due to $\delta_{\min}\ge\kappa$.
	
	For the left part of \eqref{cocoercive1}, we have
	\begin{eqnarray}\label{cocoercive3}
	\setlength{\abovedisplayskip}{4pt}	
	\setlength{\belowdisplayskip}{4pt}
	&\left\langle\Gamma^{-1}\mathcal{C}(\textbf{w}_1)-\Gamma^{-1}\mathcal{C}(\textbf{w}_2), \textbf{w}_1-\textbf{w}_2\right\rangle_{\Gamma}\nonumber\\
	&\qquad\qquad\qquad=\left\langle \mathcal{C}(\textbf{w}_1)- \mathcal{C}(\textbf{w}_2), \textbf{w}_1-\textbf{w}_2\right\rangle\nonumber\\
	&\qquad\qquad\qquad\ge \beta \left\|\mathcal{C}(\textbf{w}_1)-\mathcal{C}(\textbf{w}_2) \right\|^2 _2
	\end{eqnarray}
	where the inequality is from assertion 1) of Lemma \ref{Lemma5}. From \eqref{cocoercive2} and \eqref{cocoercive3}, we have \eqref{cocoercive1}.
	
	As $\Gamma^{-1}\mathcal{C}$ is $\beta\kappa$-cocoercive, we have $\beta\kappa\Gamma^{-1}\mathcal{C}\in\mathcal{A}(\frac{1}{2})$. That is to say, there is a nonexpansive operator $\tilde{\mathcal{S}}$ such that $\beta\kappa\Gamma^{-1}\mathcal{C}=\frac{1}{2}{\rm{Id}}+\frac{1}{2}\tilde{\mathcal{S}}$, i.e., $\Gamma^{-1}\mathcal{C}=\frac{1}{2\beta\kappa}{\rm{Id}}+\frac{1}{2\beta\kappa}\tilde{\mathcal{S}}$. Then, 
	\begin{equation}
	\setlength{\abovedisplayskip}{4pt}	
	\setlength{\belowdisplayskip}{4pt}
	\mathcal{S}_2={\rm{Id}}-\Gamma^{-1}\mathcal{C}=\left(1-\frac{1}{2\beta\kappa}\right){\rm{Id}}-\frac{1}{2\beta\kappa}\tilde{\mathcal{S}} 
	\end{equation}
	As $0<\frac{1}{2\beta\kappa}<1$ and $-\tilde{\mathcal{S}}$ is also nonexpansive, we have $\mathcal{S}_2\in \mathcal{A}(\frac{1}{2\beta\kappa})$.

	3): From \cite[Propsition 2.4]{combettes2015compositions}, $\mathcal{S}=\mathcal{S}_1\mathcal{S}_2$ is a $a$-averaged operator with $a=\frac{a_1+a_2-2a_1a_2}{1-a_1a_2}$, if $\mathcal{S}_1$ is $a_1$-averaged and $\mathcal{S}_2$ is $a_2$-averaged. As $\mathcal{S}_1 \in \mathcal{A}\left(\frac{1}{2\beta\kappa}\right)$ and $\mathcal{S}_2 \in \mathcal{A}\left(\frac{1}{2}\right)$, we have $\mathcal{S} \in \mathcal{A}\left(\frac{2\kappa\beta}{4\kappa \beta-1}\right)$.	
\end{proof}

By the definition of the averaged operator and assertion 3) of Lemma \ref{Lemma6}, there exists a nonexpansive operator $\mathcal{T}$ such that 
\begin{equation}\label{nonexpansive}
\setlength{\abovedisplayskip}{4pt}	
\setlength{\belowdisplayskip}{4pt}
\mathcal{S}=\left(1-\frac{2\kappa\beta}{4\kappa \beta-1}\right){\rm{Id}}+\frac{2\kappa\beta}{4\kappa \beta-1}\mathcal{T}
\end{equation}
Apparently, operators $\mathcal{S}$ and $\mathcal{T}$ have the same fixed points, i.e., $Fix (\mathcal{S})=Fix (\mathcal{T})$.

We convert the asynchronous algorithm into a fixed-point iteration problem with an averaged operator. Moreover, we also construct a nonexpansive operator $\mathcal{T}$, which enables us to prove the convergence of the asynchronous algorithm ASDVC. 

%
%
%
%

\subsection{Optimality of the equilibrium point}

The definition of the equilibrium point of ASDVC is introduced as follows.
\begin{definition}\label{equilibrium2}
	A point $\textbf{w}^*=\text{col}(w^*_j)= \text{col}({x}_j^*, {\lambda}_j^*)$ is an {equilibrium point} of  system \eqref{inertia_algorithm_async} if $\lim_{t_j\rightarrow \infty} w_{t_j}=w_j^*,\ \forall j$ holds.
\end{definition}

Now, we give the KKT condition of the optimization problem \eqref{eq_opt2} \cite[Theorem 3.25]{ruszczynski2006nonlinear}.
\begin{subequations}
	\label{KKT}
	\setlength{\abovedisplayskip}{4pt}	
	\setlength{\belowdisplayskip}{4pt}
	\begin{align}
	0&=( {\textbf{V}}-{\textbf{V}}^o)+\textbf{B}^T\bm{\lambda} \\
	0&=\nabla_{{\textbf{z}}}g({\textbf{z}})-\text{col}(K\bm{\lambda}, \bm{\lambda}) +N_{\Omega}(\textbf{z})\\
	0&=\bm{\varpi}^s-\textbf{B} {\textbf{V}}+(K\cdot I,I)\textbf{z}
	\end{align}
\end{subequations}

Denote $ \textbf{V}^*=-\textbf{B}^T{\bm{\lambda}}^*+{\textbf{V}}^o $, and we have the following result.
\begin{theorem}
	\label{optimality}
	The point $(\textbf{V}^*, \textbf{z}^*, \bm{\lambda}^*)$ satisfies the KKT condition \eqref{KKT}, i.e., it is the primal-dual optimal solution to the optimization problem \eqref{eq_opt2}.
\end{theorem}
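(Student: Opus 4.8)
The plan is to use the fact that, by Definition \ref{equilibrium2}, an equilibrium point is a fixed point of the iteration, and then to read off the three relations of \eqref{KKT} block by block from the operator form of the algorithm. First I would show that at the equilibrium every tilde-variable coincides with the equilibrium itself. Since $w_{t_j}\to w_j^*$ for every $j$, every delayed quantity $w_{k,t_k-\tau_k^{t_k}}$ appearing in \eqref{inertia_algorithm_async} also converges to $w_k^*$; passing to the limit in the inertia updates \eqref{ASYN3}--\eqref{ASYN5} and using $\eta>0$ forces $\tilde{\textbf{z}}^*=\textbf{z}^*$ and $\tilde{\bm{\lambda}}^*=\bm{\lambda}^*$, i.e. $\tilde{\textbf{w}}^*=\textbf{w}^*$. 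Equivalently, $\textbf{w}^*$ is a fixed point of the synchronous map $\mathcal{S}$ in \eqref{Opera2}, so that the time delays play no role in the limit.

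Next I would substitute $\tilde{\textbf{w}}^*=\textbf{w}^*$ into the operator identity $-\mathcal{C}(\textbf{w}_t)=\mathcal{D}(\tilde{\textbf{w}}_t)+\Gamma(\tilde{\textbf{w}}_t-\textbf{w}_t)$. The $\Gamma$-term then vanishes, leaving the inclusion $0\in\mathcal{C}(\textbf{w}^*)+\mathcal{D}(\textbf{w}^*)$ (with ``$=$'' read as ``$\in$'' per the footnote). Writing out the two blocks from \eqref{OperB2}--\eqref{OperU2} gives $0\in F(\textbf{z}^*)+N_{\Omega}(\textbf{z}^*)-\text{col}(K\bm{\lambda}^*,\bm{\lambda}^*)$ for the primal block and $0=\bm{\varpi}^a+\textbf{B}^2\bm{\lambda}^*+(K\cdot I_n,I_n)\textbf{z}^*$ for the dual block. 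Since $F=\nabla_{\textbf{z}}g$ by definition, the primal block is exactly the second relation of \eqref{KKT}.

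Finally I would recover the two remaining KKT relations using the definition $\textbf{V}^*=-\textbf{B}^T\bm{\lambda}^*+\textbf{V}^o$. Rearranged, this reads $(\textbf{V}^*-\textbf{V}^o)+\textbf{B}^T\bm{\lambda}^*=0$, which is precisely the first relation of \eqref{KKT}. For the third, I substitute $\bm{\varpi}^a=\bm{\varpi}^s-\textbf{B}\textbf{V}^o$ into the dual block and use the symmetry $\textbf{B}=\textbf{B}^T$ to write $\textbf{B}^2\bm{\lambda}^*=\textbf{B}\textbf{V}^o-\textbf{B}\textbf{V}^*$; the two $\textbf{B}\textbf{V}^o$ terms cancel, leaving $0=\bm{\varpi}^s-\textbf{B}\textbf{V}^*+(K\cdot I_n,I_n)\textbf{z}^*$, which matches the third relation of \eqref{KKT}.

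I expect the main obstacle to lie in two places rather than in any deep estimate. The first is the justification of the opening step: arguing rigorously that the asynchronous limit genuinely reduces to the synchronous fixed-point condition \eqref{Opera2} despite the random delays $\tau_k^{t_k}$ and the local clocks, which requires that \emph{all} delayed arguments converge to $\textbf{w}^*$. The second is the careful algebraic bookkeeping in the last paragraph—correctly unfolding the shorthand $\bm{\varpi}^a$, invoking $\textbf{B}=\textbf{B}^T$ at the right moment, and matching $(K\cdot I_n,I_n)\textbf{z}^*$ with the $(K\cdot I,I)\textbf{z}$ term of \eqref{KKT}—since an error there is what would most plausibly break the chain of equalities.
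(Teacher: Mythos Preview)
Your proposal is correct and follows essentially the same route as the paper: show that at the equilibrium all tilde-variables coincide with $\textbf{w}^*$, substitute into \eqref{inertia_algorithm_compact_async2} so the $\Gamma$-term drops, and then read off the three KKT relations using $\textbf{V}^*=-\textbf{B}^T\bm{\lambda}^*+\textbf{V}^o$ and $\bm{\varpi}^a=\bm{\varpi}^s-\textbf{B}\textbf{V}^o$. The only difference is presentational---the paper compresses your last two paragraphs into a single displayed system \eqref{equilibrium_compact} without spelling out the cancellation of $\textbf{B}\textbf{V}^o$, whereas you make that algebra explicit.
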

\begin{proof}
	By Definition \ref{equilibrium2}, we know ${w}_j^*=\lim\limits_{t_j\rightarrow \infty} {w}_{j,t_j-1}=\lim\limits_{t_j\rightarrow \infty} {w}_{j,t_j}=$ $\lim\limits_{t_j\rightarrow \infty} {w}_{j,t_j+1}=$  $\lim\limits_{t_j\rightarrow \infty} \tilde{{w}}_{j,t_j}$. From \eqref{inertia_algorithm_compact_async2}, we have 
	\begin{subequations}\label{equilibrium_compact}
		\setlength{\abovedisplayskip}{4pt}	
		\setlength{\belowdisplayskip}{4pt}
		\begin{align}
		&-( {\textbf{V}}^*-{\textbf{V}}^o)=\textbf{B}^T\bm{\lambda}^*\\
		&-\nabla_{{\textbf{z}^*}}g({\textbf{z}^*})=N_{\Omega}(\textbf{z}^*)-\text{col}(K\bm{\lambda}^*, \bm{\lambda}^*) \\
		&-\bm{\varpi}^s=-\textbf{B} {\textbf{V}}^*+(K\cdot I,I)\textbf{z}^*
		\end{align}
	\end{subequations}
	Comparing \eqref{KKT} and \eqref{equilibrium_compact}, we know $(\textbf{V}^*, \textbf{z}^*, \bm{\lambda}^*)$ satisfies the KKT condition. This completes the proof. 
\end{proof}

\subsection{Convergence analysis}
In this subsection, we investigate the convergence of ASDVC. We first treat ASDVC as a randomized block-coordinate fixed-point iteration problem with delayed information. Then, the results in \cite{peng2016arock} can be applied. 

To prove the convergence of ASDVC, we need introduce a global clock to substitute the local clocks of individual buses in ASDVC. The main idea is to queue $t_j$ of all buses in the order of  real time, and use a new number $t$ to denote the $t$-th iteration in the queue. Take two local clocks as an example. Suppose the local clocks to be $t_1=\{1, 3, 5, \cdots \}$ and $t_2=\{2, 4, 6, \cdots \}$, and then the global clock is $t=\{1, 2, 3, 4, 5, 6, \cdots \}$.
In the global clock, it is assumed that the probability bus $j$ is activated to update its local variables follows a uniform distribution. Hence, each bus is activated with the same probability. Note that the global clock is only used for convergence analysis, but it does not exist in the application. 

%

Define vectors $\psi_j\in \mathbb{R}^{3n}, j\in \mathcal{N}$. The $i$th entry of $\psi_j$ is denoted by $[\psi_j]_i$. Define $[\psi_j]_i=1$ if the $i$th coordinate of $\textbf{w}$ is also a coordinate of $w_j$, and $[\psi_j]_i=0$, otherwise. Denote by $\xi$ a random variable (vector) taking values in $\psi_j, j\in \mathcal{N}$. Then $\textbf{Prob}(\xi=\psi_j)=1/n$ also follows a uniform distribution. Let $\xi_t$  be the value of $\xi$ at the $t$th iteration. Then, a randomized block-coordinate fixed-point iteration for \eqref{Opera2} is given by
\begin{equation}
\setlength{\abovedisplayskip}{4pt}	
\setlength{\belowdisplayskip}{4pt}
\label{random_fixed}
\textbf{w}_{t+1}=\textbf{w}_t+\eta\xi_t\circ(\mathcal{S}(\textbf{w}_t)-\textbf{w}_t)
\end{equation}
where $\circ$ denotes the Hadamard product. In \eqref{random_fixed}, only one bus $j$ is activated at each iteration. 

Since  \eqref{random_fixed} is delay-free,  we further modify it for considering delayed information, which is 
\begin{equation}
\label{delay_random_fixed}
\setlength{\abovedisplayskip}{4pt}	
\setlength{\belowdisplayskip}{4pt}
\textbf{w}_{t+1}=\textbf{w}_t+\eta\xi_t\circ(\mathcal{S}(\hat {\textbf{w}}_t)- \textbf{w}_t)
\end{equation}
where $\hat {\textbf{w}}_t$ is the information with delay at iteration $t$. 
We will show that Algorithm \ref{algorithm1} can be written as \eqref{delay_random_fixed} if $\hat {\textbf{w}}_t$ is properly defined.
Suppose bus $j$ is activated at the iteration $t$, then $\hat {\textbf{w}}_t$ is defined as follows. For bus $j$, replace $p_{j,t_j}$, $q_{j,t_j}$ and $\lambda_{j,t_j}$ with $p_{j,t_j-\tau_{j}^{t_j}}$, $q_{j,t_j-\tau_{j}^{t_j}}$ and $\lambda_{j,t_j-\tau_{j}^{t_j}}$. Similarly, replace $\lambda_{k,t_k}$ with $\lambda_{k,t_k-\tau_{k}^{t_k}}$ from its neighbors and two-hop neighbors. For inactivated buses, their state values keep unchanged. 

Before proving the convergence, we make an assumption.
\begin{assumption}
	\label{bounded_delay}
	The maximal time delay between two consecutive iterations is bounded by $\chi$, i.e., $\{\max\{\tau_j^t\}\}\le \chi, \forall t, \forall j$.
\end{assumption}

With the assumption, we have the convergence result.
\begin{theorem}\label{convergence}
	Suppose Assumptions \ref{Lipschitz}, \ref{bounded_delay} holds. Take $0<\beta\le\min\{\frac{1}{\sigma_{\max}^2},\frac{1}{\vartheta}\}$, $\kappa>\frac{1}{2\beta}$, and the step sizes  $\alpha_{pq}, \alpha_{\lambda}$ such that $\Gamma-\kappa I$ is positive semi-definite. Choose $0<\eta< \frac{1}{1+2\chi/\sqrt{n}} \frac{4\kappa \beta-1}{2\kappa\beta}$. Then, with ASDVC, $\textbf{w}_t$ converges to the point $\textbf{w}^{*}$ defined in Definition \ref{equilibrium2} with probability 1.
\end{theorem}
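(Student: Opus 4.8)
The plan is to recognize that the earlier lemmas have been arranged precisely so that the statement reduces to the asynchronous randomized coordinate-update convergence theorem of \cite{peng2016arock}. First I would record what is already in hand: by assertion 3) of Lemma \ref{Lemma6} and the factorization \eqref{nonexpansive}, the operator $\mathcal{S}=\mathcal{S}_1\mathcal{S}_2$ is $c$-averaged under $\|\cdot\|_\Gamma$ with $c=\frac{2\kappa\beta}{4\kappa\beta-1}$, and there is a nonexpansive $\mathcal{T}$ with $Fix(\mathcal{S})=Fix(\mathcal{T})$. A point is an equilibrium of \eqref{inertia_algorithm_async} exactly when the increment in \eqref{Opera2} vanishes, i.e.\ when $\textbf{w}=\mathcal{S}(\textbf{w})$; hence $Fix(\mathcal{S})$ is the set of equilibria, and by Theorem \ref{optimality} every such point is the primal-dual optimizer of \eqref{eq_opt2}. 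So it suffices to show that the iterates $\textbf{w}_t$ produced by Algorithm \ref{algorithm1} converge almost surely to a point of $Fix(\mathcal{S})$.

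The key reduction is to certify that Algorithm \ref{algorithm1} \emph{is} an instance of the delayed block-coordinate iteration \eqref{delay_random_fixed} driven by this $\mathcal{S}$. I would make this rigorous through the global-clock device of Section 5.3: queue the local ticks $t_j$ of all buses in real-time order and relabel them by a single index $t$, so that exactly one block $\xi_t=\psi_j$ is refreshed per $t$, and by the uniform-activation assumption $\textbf{Prob}(\xi_t=\psi_j)=1/n$. With the delayed iterate $\hat{\textbf{w}}_t$ assembled by replacing the activated bus's states with the stale values $p_{j,t_j-\tau_j^{t_j}},q_{j,t_j-\tau_j^{t_j}},\lambda_{j,t_j-\tau_j^{t_j}}$ and the neighbor/two-hop values with $\lambda_{k,t_k-\tau_k^{t_k}}$, one checks block by block that \eqref{ASYN1}--\eqref{ASYN5} reproduce $\textbf{w}_{t+1}=\textbf{w}_t+\eta\,\xi_t\circ(\mathcal{S}(\hat{\textbf{w}}_t)-\textbf{w}_t)$; this is just the statement that the projection/normal-cone step \eqref{ASYN1} together with the dual step \eqref{ASYN2} realizes the resolvent $\mathcal{S}_1=({\rm{Id}}+\Gamma^{-1}\mathcal{D})^{-1}$ composed with the forward step $\mathcal{S}_2={\rm{Id}}-\Gamma^{-1}\mathcal{C}$, coordinate-wise.

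With the equivalence in place, I would verify the hypotheses of the ARock theorem: $\mathcal{S}$ is averaged (Lemma \ref{Lemma6}); the delays are uniformly bounded by $\chi$ (Assumption \ref{bounded_delay}); the activated block is chosen i.i.d.\ uniformly; and the relaxation parameter lies in the admissible ARock range, which for a $c$-averaged operator with maximal delay $\chi$ over $n$ coordinates takes the form $0<\eta<\frac{1}{1+2\chi/\sqrt{n}}\cdot\frac{1}{c}$. Substituting $c=\frac{2\kappa\beta}{4\kappa\beta-1}$ gives exactly the stated bound $0<\eta<\frac{1}{1+2\chi/\sqrt{n}}\frac{4\kappa\beta-1}{2\kappa\beta}$. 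The ARock supermartingale argument then delivers almost-sure convergence of $\textbf{w}_t$ to some $\textbf{w}^*\in Fix(\mathcal{S})$, which by the first paragraph is the equilibrium of Definition \ref{equilibrium2} and the optimizer of \eqref{eq_opt2}.

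The step I expect to be the main obstacle is the reduction itself, not the final invocation. The delicate bookkeeping is translating the per-bus local clocks with heterogeneous delays $\tau_j^{t_j}$ into a single global iterate $\hat{\textbf{w}}_t$ and verifying that what each bus actually reads from its neighbors' output caches equals the corresponding delayed coordinates of that \emph{same} global vector. One must also confirm that asynchrony leaves the averagedness untouched, i.e.\ that the single operator $\mathcal{S}$ drives every block so that one uniform step-size bound suffices, and that Assumption \ref{bounded_delay} indeed bounds all the delays appearing across \eqref{ASYN1}--\eqref{ASYN2}. Everything downstream is then handled by Lemmas \ref{Lemma5}--\ref{Lemma6} and the cited convergence result.
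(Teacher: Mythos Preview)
Your proposal is correct and matches the paper's approach: reduce ASDVC via the global-clock device to the delayed block-coordinate form \eqref{delay_random_fixed}, pass to the ARock iteration driven by the nonexpansive $\mathcal{T}$ of \eqref{nonexpansive}, and invoke Lemma~13 and Theorem~14 of \cite{peng2016arock}. The one step the paper makes more explicit than you do is the identity $\xi_t\circ(\hat{\textbf{w}}_t-\textbf{w}_t)=0$ (since the activated bus's own coordinates satisfy $w_{j,t-\tau_j^t}=w_{j,t}$ under the global clock), which is exactly what allows one to replace $\mathcal{S}(\hat{\textbf{w}}_t)-\textbf{w}_t$ by $\tfrac{2\kappa\beta}{4\kappa\beta-1}\bigl(\mathcal{T}(\hat{\textbf{w}}_t)-\hat{\textbf{w}}_t\bigr)$ and read off the stated bound on $\eta$.
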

\begin{proof}
	Combining \eqref{nonexpansive} and \eqref{delay_random_fixed}, we have 
	\begin{align}
	\textbf{w}_{t+1}&=\textbf{w}_t+\eta\xi_t\circ\left(\left(1-\frac{2\kappa\beta}{4\kappa \beta-1}\right)\hat {\textbf{w}}_t - \textbf{w}_t+\frac{2\kappa\beta}{4\kappa \beta-1}\mathcal{T}(\hat {\textbf{w}}_t)\right) \nonumber\\
	\label{nonexpansive2}
	&=\textbf{w}_t+\eta\xi_t\circ\left(\hat {\textbf{w}}_t- \textbf{w}_t	+\frac{2\kappa\beta}{4\kappa \beta-1}(\mathcal{T}(\hat {\textbf{w}}_t)-\hat {\textbf{w}}_t)\right)
	\end{align}
	With $w_{j,t-\tau_{j}^t}=w_{j,t-\tau_{j}^t+1}=,\cdots,=w_{j,t}$, we have $\xi_k\circ\left(\hat {\textbf{w}}_t- \textbf{w}_t\right)=0$. Thus, \eqref{nonexpansive2} is equivalent to 
	\begin{equation}
	\label{nonexpansive3}
	\setlength{\abovedisplayskip}{4pt}	
	\setlength{\belowdisplayskip}{4pt}
	\textbf{w}_{t+1}=\textbf{w}_t+\frac{2\eta\kappa\beta}{4\kappa \beta-1}\xi_t\circ\left(\mathcal{T}(\hat {\textbf{w}}_t)-\hat {\textbf{w}}_t\right)
	\end{equation}
	In fact, \eqref{nonexpansive3} has the form of the ARock algorithms proposed in \cite{peng2016arock}.
	In Lemma 13 and Theorem 14 of \cite{peng2016arock}, it is proved that $\textbf{w}_t$ generated by \eqref{nonexpansive3} is bounded. Moreover, if $\eta$ satisfies $0<\eta< \frac{1}{1+2\chi/\sqrt{n}} \frac{4\kappa \beta-1}{2\kappa\beta}$, $\textbf{w}_t$ converges to a random variable that takes value in the fixed points of $\mathcal{T}$ with probability 1, denoted by $\textbf{w}^*$. Recall $Fix (\mathcal{S})=Fix (\mathcal{T})$ and Theorem  \ref{optimality}, and we know that $\textbf{w}^*$ satisfies the KKT condition \eqref{KKT}, i.e., the equilibrium in Definition \ref{equilibrium2}. 
	This completes the proof.
\end{proof}

\begin{remark}[Nonexpansive operator]
	As $\mathcal{S}$ is $\frac{2\kappa\beta}{4\kappa \beta-1}$-averaged, it is also a nonexpansive operator \cite[Remark 4.24]{bauschke2011convex}. Then, in Theorem \ref{convergence}, we can also use $\mathcal{S}$ instead of $\mathcal{T}$ to prove the convergence of ASDVC. In this situation, the bound of $\eta$ is $0<\eta< \frac{1}{1+2\chi/\sqrt{n}}$. Since $\kappa>\frac{1}{2\beta}$, we have $\frac{4\kappa \beta-1}{2\kappa\beta}>1$. This implies that the operator $\mathcal{T}$ can increase the upper bound of $\eta$ compared with $\mathcal{S}$. 
\end{remark}

\section{Implementation}
\subsection{Communication graph}
	Although two-hop neighbors' information is utilized in the algorithm \eqref{inertia_algorithm_async}, the communication graph still can be fully distributed, i.e., neighboring communication. In \eqref{ASYN2}, two-hop neighbors' information is needed to obtain $\tilde B_{jk}$ and $\lambda_{k,t_k-\tau_{k}^{t_k}},  k\in N^2_j$. For $\tilde B_{jk}$, it can be obtained from twice neighboring communications. In addition, as the topology of a distribution network does not change frequently, $\tilde B_{jk}$ can be obtained in advance. For $\lambda_{k,t_k-\tau_{k}^{t_k}},  k\in N^2_j$, it also can be obtained from neighboring communications, which is illustrated in Fig.\ref{two_step}.  
	\begin{figure}[t]
		\setlength{\abovecaptionskip}{0pt}
		\centering
		\includegraphics[width=0.25\textwidth]{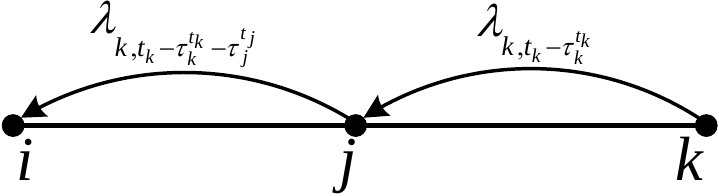}
		\caption{Two-step communications}
		\label{two_step}
	\end{figure}

	Node $i$ can get the information of node $k$ by twice communications. In this situation, the time delay may be longer. As this is the asynchronous algorithm, we treat $\tau_{j}^{t_j}+\tau_{k}^{t_k}$ as one delay $\tau_{k}^{t_k}$ if there is no confusion. Then, one step and two-step communication delays can be formulated into a uniform framework. In this way, only neighboring communications are needed in the asynchronous algorithm.

\subsection{Online Implementation}\label{Online_implementation}
	
	In the ASDVC, $\varpi_j^a$ is assumed to be available for every bus $ j $. By its definition, $\varpi_j^a$ is determined by almost all of the power injection in the network, which implies that a centralized method is needed to get their values. Thus, if the system states vary rapidly with the variation of renewable generations and loads, they are difficult to obtain. From \eqref{eq_opt2b}, $\bm{\varpi}^s$ can be obtained from an equivalent way if we can measure the local voltage, active and reactive power injections and get the neighbors' voltages. 
	Similar is $ \bm{\varpi}^a $ due to $\bm{\varpi}^a=\bm{\varpi}^s-\textbf{B}{\textbf{V}}^o$. Denote the set of buses connected directly to bus $0$ by $N_0$. If we set ${\textbf{V}}^o=0.5\times\textbf{1}$, we have $\textbf{B}{\textbf{V}}^o=\left\{  \begin{array}{l}
	0,\ \ \  j\notin N_0\\
	\frac{1}{2x_{0j}}, j\in N_0
	\end{array} \right.$.
	Then, $\varpi_j^a$ in the ASDVC algorithm can be obtained by  
	\begin{equation}\label{ASDVC_online}
	\setlength{\abovedisplayskip}{4pt}	
	\setlength{\belowdisplayskip}{4pt}
	\varpi_j^a=\left\{  \begin{array}{l}
	-\sum\nolimits_{k\in N_j} B_{jk}  V_k^m
	+K{p_j^m}+{q}^m_j, j\notin N_0\\
	-\frac{1}{2x_{0j}}-\sum\nolimits_{k\in N_j} B_{jk}  V_k^m
	+K{p_j^m}+{q}^m_j, j\in N_0
	\end{array} \right.	
	\end{equation}
	where ${p_j^m}, {q}^m_j$ are measured active and reactive power locally and $V_k^m$ is the square of  measured voltage of the neighbor. 
	
	In \eqref{ASDVC_online}, only communications between neighbors are needed. We can also use $p_{j,t}, q_{j,t}$ instead of ${p_j^m}, {q}^m_j$ to avoid power measurements. In the inverter integrated DERs, $p_{j,t}\approx{p_j^m}$ and $q_{j,t}\approx{q}^m_j$ as the response is very fast. The voltage measurements contain the latest system information, which makes the implementation track the time-varying operating conditions.

\section{Case Studies}\label{sec:sr}
In this section, simulation results are presented to demonstrate the effectiveness of the proposed voltage control methods. To this end, an 8-bus feeder and the IEEE 123-bus feeder are utilized as test systems. Each bus is equipped with a certain amount of PVs, which are able to offer flexible active and reactive power supplies to the feeder. Some buses have other controllable DERs like small hydro plants and energy storage systems. The simulations are implemented in Matlab R2017b simulator, and the OpenDSS is used for solving the ac power flow. 

\subsection{8-bus feeder}
An 8-bus distribution network is considered \cite{Tang2018Fast}, whose diagram is shown in Fig.\ref{8bus}. The impedance of each line segment is identical, which is $0.9216+j0.4608\ \Omega$ with $K = 2$. All buses except bus 0 are equipped with DERs with active power limit as $\overline{\textbf{p}}=-\underline{\textbf{p}}=[90, 100, 0,$ $  120, 170, 90, 70]$kW, reactive power limit as $\overline{\textbf{q}}=-\underline{\textbf{q}}=[100, $ $   100, 110, 100, 130, 100, 120]$kVar. The capacity limit of inverter at each bus is $0.9*\sqrt{\overline{\textbf{p}}^2+\overline{\textbf{q}}^2}$kVA. 
\begin{figure}[t]
	\setlength{\abovecaptionskip}{0pt}
	\centering
	\includegraphics[width=0.15\textwidth]{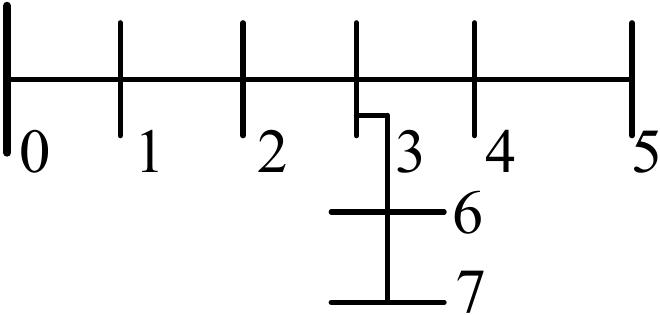}
	\caption{The graph of the 8-bus distribution network}
	\label{8bus}
\end{figure}

We first use CPLEX to obtain the optimal solution, which is $ \textbf{V}^*=[0.9934,1.0063,$  $1.0083,$  $1.0282,$  $0.9492,1.0073,0.9987] $, $ \textbf{p}^*=[63.08, 65.41, 0, 120, 170,$  $ 
70.57,$  $ 70] $kW, $ \textbf{q}^*=[31.53, $  $32.71,$  $ 63, 73.24, 90.54, 35.28, 41.29] $kVar. Then, SDVC and ASDVC are utilized in the voltage control in 8-bus feeder. 
We compare the controllers' performance by showing how $ \frac{\|\textbf{w}-\textbf{w}^*\|_2^2}{\|\textbf{w}^*\|_2^2} $ is evolving with the number of average iterations of each MG, which is given in  Fig.\ref{voltage_error_8_node}. 
\begin{figure}[t]
	\setlength{\abovecaptionskip}{0pt}
	\centering
	\includegraphics[width=0.38\textwidth]{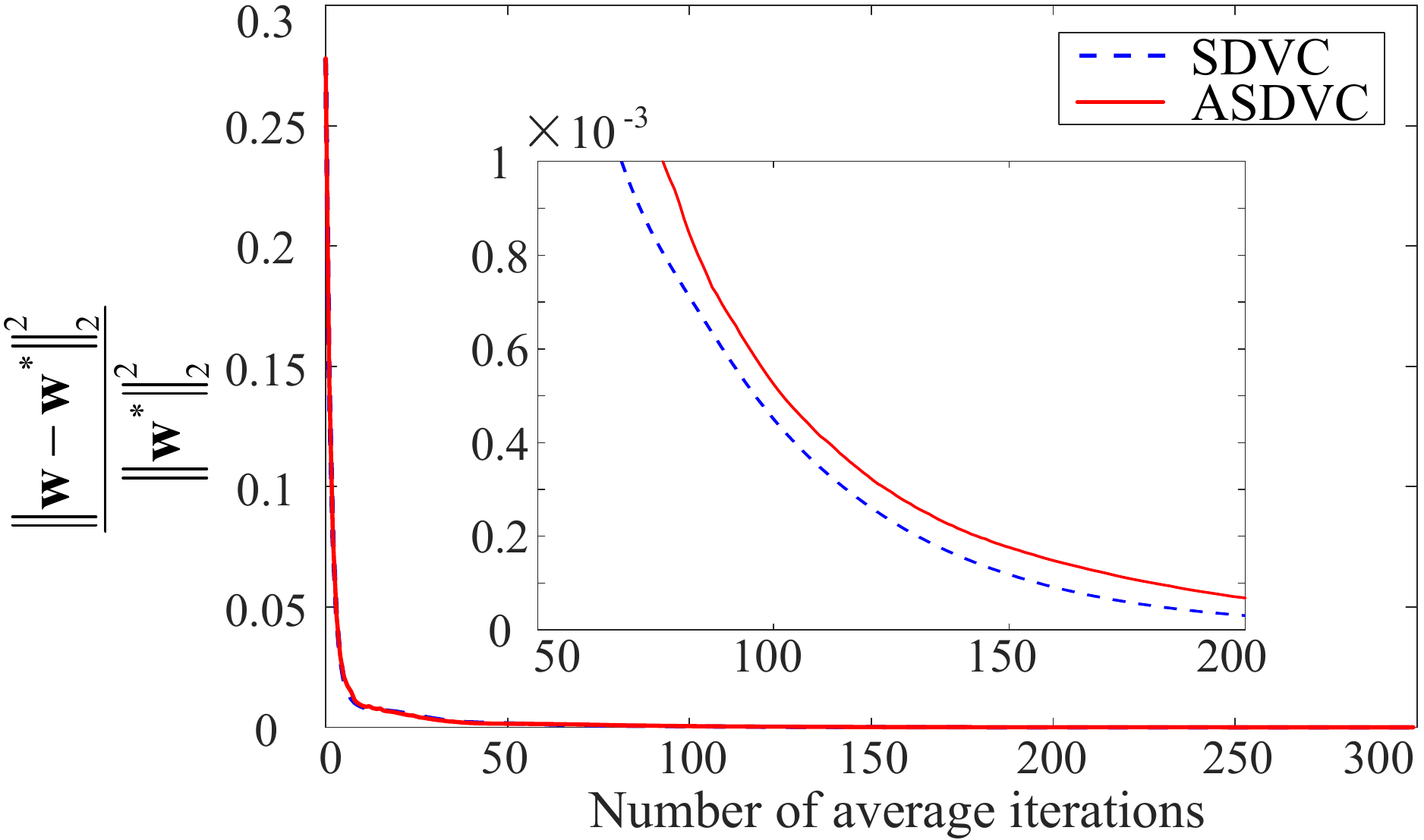}
	\caption{Comparison of algorithm convergence in terms of number of average iterations for SDVC and ASDVC. In the ASDVC, the random delay between $ 0\sim10 $ iterations is considered. }
	\label{voltage_error_8_node}
\end{figure}
The SDVC and ASDVC algorithms have similar convergence speed, taking about 50 iterations. In contrast, ASDVC is only a bit of inferior to SDVC in terms of the number of average iterations.

\subsection{IEEE 123-bus feeder}
In addition to the 8-bus feeder, we also test the proposed method on the IEEE 123-bus system to show the scalability and practicability, the diagram of which is shown in Fig.\ref{IEEE123}. It should be noted that the IEEE 123-bus system is not homogeneous, where the $r_{ij}/x_{ij}$ ranges from $0.42\sim2.02$. In the simulation, we take $K=1$, which also shows the robustness of our method. In this case, the real data of residential load and solar generation is utilized. The minute-sampled profiles of active and reactive load are from an online data repository \cite{website_load}, and we use the data of July 13th, 2010. The minute-sampled profile of solar generation is from  \cite{website_PV}, which were collected in a city in Utah, U.S. and we use the data of July 14th, 2010. The profiles of active, reactive loads and solar generation are given in Fig.\ref{load_demand}, where the black curve is the active power (kW), red curve is the reactive power (kVar) and dotted blue line is the solar generation (kW). In the simulation, the tap positions of voltage regulators are held constant in order to better capture the performance of proposed method. The voltage at the substation of the feeder head is set as 1 p.u. and the value of $\textbf{V}^o$ is $0.5\times\textbf{1}$ p.u.. Each residential home is equipped with a solar generation. The capacity limit at each bus is $20$kVA. The upper limit of active power is the instantaneous generation of the PV and the reactive power limit is determined correspondingly. Some buses are equipped with small hydro plants (marked as red in Fig.\ref{IEEE123}). The active power limits are $300$kW.
When load and solar generation change, the method in Section \ref{Online_implementation} is utilized for the online implementation. In each minute, a quasi-static operating condition is adopted, and the proposed controller is implemented with each iteration updated every 0.2 seconds (a total of 300 iterations per minute). 

\begin{figure}[t]
	\setlength{\abovecaptionskip}{0pt}
	\centering
	\includegraphics[width=0.4\textwidth]{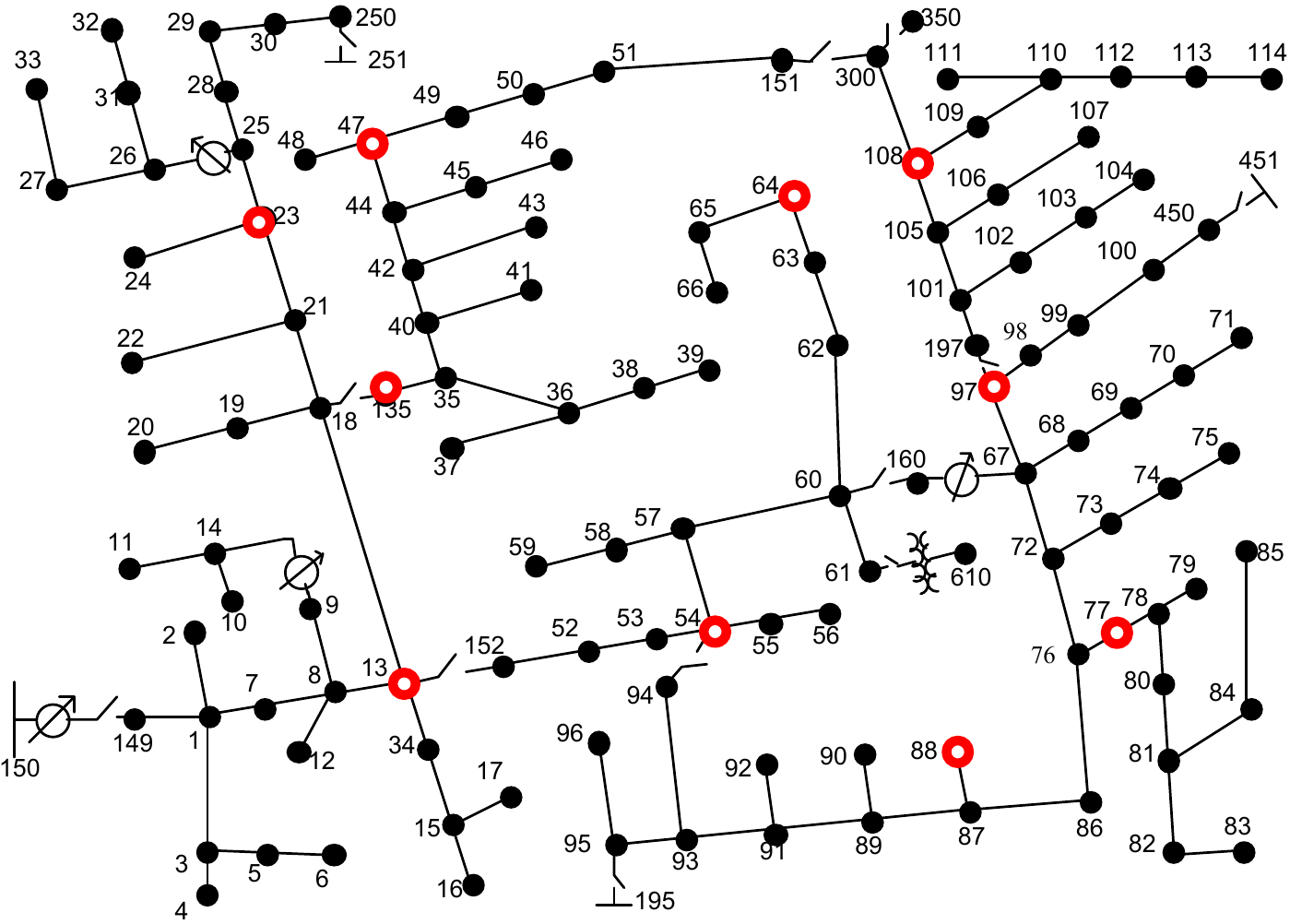}
	\caption{IEEE 123-bus system}
	\label{IEEE123}
\end{figure}

\begin{figure}[t]
	\setlength{\abovecaptionskip}{0pt}
	\centering
	\includegraphics[width=0.38\textwidth]{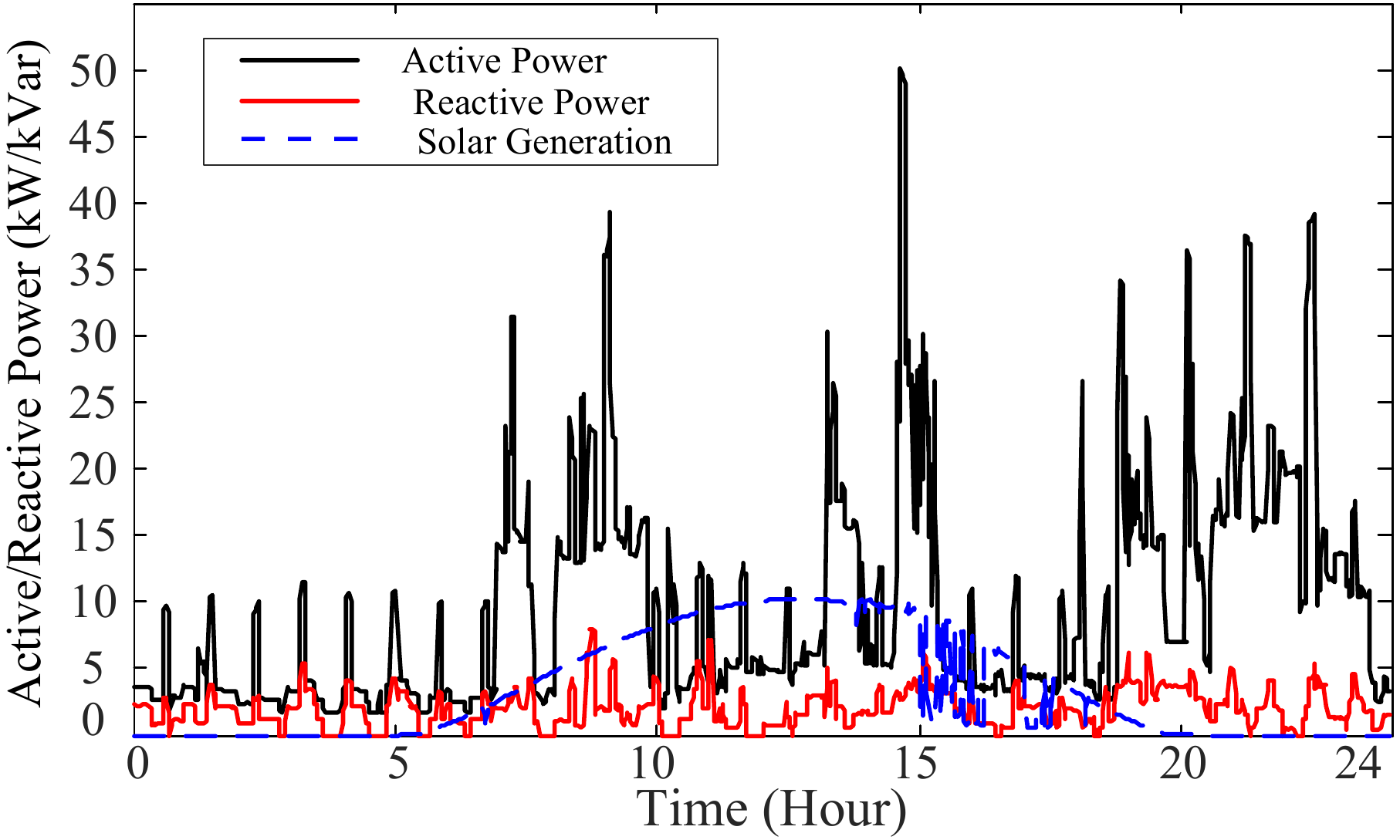}
	\caption{Active, reactive loads and solar generation within 24 hours}
	\label{load_demand}
\end{figure}

To validate the performance of the ASDVC in applications, we compare the results with SDVC and ASDVC under random time delays. The maximal time delay is $5$s. The profiles of daily network-wide voltage error with ASDVC and SDVC are given in Fig.\ref{Case123_results2}.
\begin{figure}[t]
	\setlength{\abovecaptionskip}{0pt}
	\centering
	\includegraphics[width=0.4\textwidth]{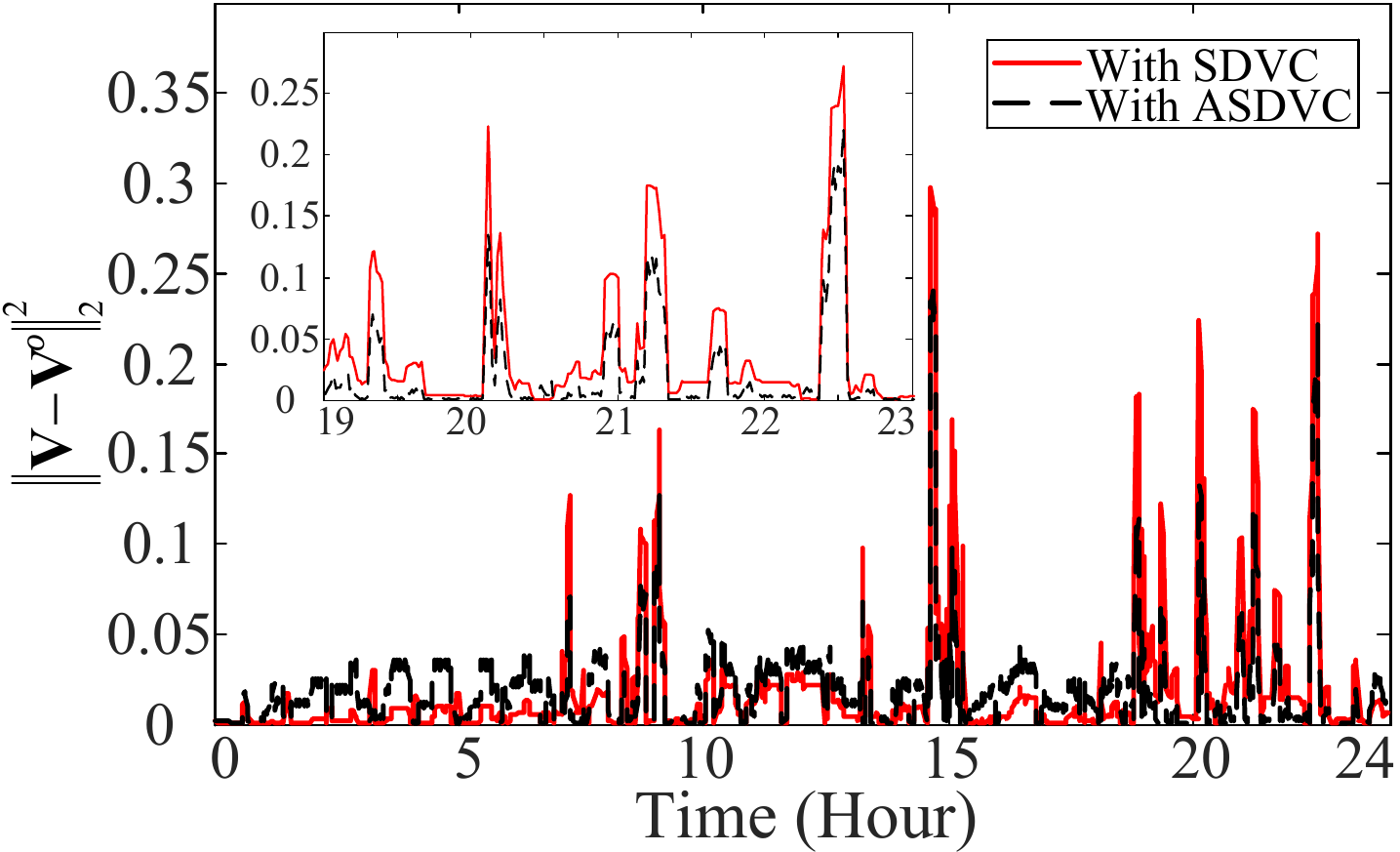}
	\caption{Daily voltage mismatch error with SDVC and ASDVC under random time delays. If SDVC is adopted, every bus has to wait for the slowest one to proceed the iteration. In contrast, every bus can update as long as new information is obtained in the ASDVC.}
	\label{Case123_results2}
\end{figure}
It is illustrated that the voltage deviation with SDVC is bigger than that with ASDVC if there exist time delays. The reason is that each bus under SDVC has to wait for the slowest neighbor to carry out the algorithm. In this situation, it cannot track system changes rapidly. It is different under ASDVC as there is no idling time for each bus. This shows that the ASDVC has better performance in time varying environments when time delays exist.


\section{Conclusion}
In this paper, we have developed an asynchronous distributed control method to regulate the voltage in distribution networks by making use of both active and reactive controllable power of DERs. The partial primal-dual gradient algorithm is utilized to design the controller with  proofs of  convergence and optimality of the equilibrium. Finally, numerical tests on an 8-bus system verify the similar convergence speed of SDVC and ASDVC. The daily simulations in the IEEE 123-bus system with real data show that the voltage deviation can be reduced using ASDVC. Simulations under random time delays show that the asynchronous algorithm has better performance in time-varying environments. 
In the theoretic analysis, the distribution network is assumed to be  three-phase symmetric and homogeneous. How to  eliminating these restrictions is among our ongoing works.


\bibliography{mybib}

\begin{thebibliography}{31}
\expandafter\ifx\csname natexlab\endcsname\relax\def\natexlab#1{#1}\fi
\providecommand{\url}[1]{\texttt{#1}}
\providecommand{\href}[2]{#2}
\providecommand{\path}[1]{#1}
\providecommand{\DOIprefix}{doi:}
\providecommand{\ArXivprefix}{arXiv:}
\providecommand{\URLprefix}{URL: }
\providecommand{\Pubmedprefix}{pmid:}
\providecommand{\doi}[1]{\href{http://dx.doi.org/#1}{\path{#1}}}
\providecommand{\Pubmed}[1]{\href{pmid:#1}{\path{#1}}}
\providecommand{\bibinfo}[2]{#2}
\ifx\xfnm\relax \def\xfnm[#1]{\unskip,\space#1}\fi
\bibitem[{Antoniadou-Plytaria et~al.(2017)Antoniadou-Plytaria,
  Kouveliotis-Lysikatos, Georgilakis and
  Hatziargyriou}]{antoniadou2017distributed}
\bibinfo{author}{Antoniadou-Plytaria, K.E.},
  \bibinfo{author}{Kouveliotis-Lysikatos, I.N.}, \bibinfo{author}{Georgilakis,
  P.S.}, \bibinfo{author}{Hatziargyriou, N.D.}, \bibinfo{year}{2017}.
\newblock \bibinfo{title}{Distributed and decentralized voltage control of
  smart distribution networks: models, methods, and future research}.
\newblock \bibinfo{journal}{IEEE Trans. Smart Grid} \bibinfo{volume}{8},
  \bibinfo{pages}{2999--3008}.
\bibitem[{Baran and Wu(1989a)}]{baran1989optimal2}
\bibinfo{author}{Baran, M.}, \bibinfo{author}{Wu, F.F.}, \bibinfo{year}{1989}a.
\newblock \bibinfo{title}{Optimal sizing of capacitors placed on a radial
  distribution system}.
\newblock \bibinfo{journal}{IEEE Trans. power Delivery} \bibinfo{volume}{4},
  \bibinfo{pages}{735--743}.
\bibitem[{Baran and Wu(1989b)}]{baran1989optimal}
\bibinfo{author}{Baran, M.E.}, \bibinfo{author}{Wu, F.F.},
  \bibinfo{year}{1989}b.
\newblock \bibinfo{title}{Optimal capacitor placement on radial distribution
  systems}.
\newblock \bibinfo{journal}{IEEE Trans. power Delivery} \bibinfo{volume}{4},
  \bibinfo{pages}{725--734}.
\bibitem[{Bauschke et~al.(2011)Bauschke, Combettes et~al.}]{bauschke2011convex}
\bibinfo{author}{Bauschke, H.H.}, \bibinfo{author}{Combettes, P.L.}, et~al.,
  \bibinfo{year}{2011}.
\newblock \bibinfo{title}{Convex analysis and monotone operator theory in
  Hilbert spaces}. volume \bibinfo{volume}{408}.
\newblock \bibinfo{publisher}{Springer}.
\bibitem[{Bolognani et~al.(2015)Bolognani, Carli, Cavraro and
  Zampieri}]{bolognani2015distributed}
\bibinfo{author}{Bolognani, S.}, \bibinfo{author}{Carli, R.},
  \bibinfo{author}{Cavraro, G.}, \bibinfo{author}{Zampieri, S.},
  \bibinfo{year}{2015}.
\newblock \bibinfo{title}{Distributed reactive power feedback control for
  voltage regulation and loss minimization}.
\newblock \bibinfo{journal}{IEEE Trans. Autom. Control} \bibinfo{volume}{60},
  \bibinfo{pages}{966--981}.
\bibitem[{Combettes and Yamada(2015)}]{combettes2015compositions}
\bibinfo{author}{Combettes, P.L.}, \bibinfo{author}{Yamada, I.},
  \bibinfo{year}{2015}.
\newblock \bibinfo{title}{Compositions and convex combinations of averaged
  nonexpansive operators}.
\newblock \bibinfo{journal}{Journal of Mathematical Analysis and Applications}
  \bibinfo{volume}{425}, \bibinfo{pages}{55--70}.
\bibitem[{Farivar et~al.(2011)Farivar, Clarke, Low and
  Chandy}]{farivar2011inverter}
\bibinfo{author}{Farivar, M.}, \bibinfo{author}{Clarke, C.R.},
  \bibinfo{author}{Low, S.H.}, \bibinfo{author}{Chandy, K.M.},
  \bibinfo{year}{2011}.
\newblock \bibinfo{title}{Inverter var control for distribution systems with
  renewables}, in: \bibinfo{booktitle}{Smart Grid Communications
  (SmartGridComm), 2011 IEEE International Conference on},
  \bibinfo{organization}{IEEE}. pp. \bibinfo{pages}{457--462}.
\bibitem[{Farivar et~al.(2012)Farivar, Neal, Clarke and
  Low}]{farivar2012optimal}
\bibinfo{author}{Farivar, M.}, \bibinfo{author}{Neal, R.},
  \bibinfo{author}{Clarke, C.}, \bibinfo{author}{Low, S.},
  \bibinfo{year}{2012}.
\newblock \bibinfo{title}{Optimal inverter var control in distribution systems
  with high pv penetration}, in: \bibinfo{booktitle}{Power and Energy Society
  General Meeting, 2012 IEEE}, \bibinfo{organization}{IEEE}. pp.
  \bibinfo{pages}{1--7}.
\bibitem[{Hale et~al.(2017)Hale, Nedi{\'c} and
  Egerstedt}]{hale2017asynchronous}
\bibinfo{author}{Hale, M.T.}, \bibinfo{author}{Nedi{\'c}, A.},
  \bibinfo{author}{Egerstedt, M.}, \bibinfo{year}{2017}.
\newblock \bibinfo{title}{Asynchronous multiagent primal-dual optimization}.
\newblock \bibinfo{journal}{IEEE Trans. Automatic Control}
  \bibinfo{volume}{62}, \bibinfo{pages}{4421--4435}.
\bibitem[{Han et~al.(2014)Han, Chen, Ma and Wang}]{han2014optimization}
\bibinfo{author}{Han, Y.}, \bibinfo{author}{Chen, L.}, \bibinfo{author}{Ma,
  H.}, \bibinfo{author}{Wang, Z.}, \bibinfo{year}{2014}.
\newblock \bibinfo{title}{Optimization of reactive power compensation for
  distribution power system with small hydro power}, in:
  \bibinfo{booktitle}{Power System Technology (POWERCON), 2014 International
  Conference on}, \bibinfo{organization}{IEEE}. pp.
  \bibinfo{pages}{2915--2920}.
\bibitem[{Kekatos et~al.(2015a)Kekatos, Wang, Conejo and
  Giannakis}]{kekatos2015stochastic}
\bibinfo{author}{Kekatos, V.}, \bibinfo{author}{Wang, G.},
  \bibinfo{author}{Conejo, A.J.}, \bibinfo{author}{Giannakis, G.B.},
  \bibinfo{year}{2015}a.
\newblock \bibinfo{title}{Stochastic reactive power management in microgrids
  with renewables}.
\newblock \bibinfo{journal}{IEEE Trans. Power Syst.} \bibinfo{volume}{30},
  \bibinfo{pages}{3386--3395}.
\bibitem[{Kekatos et~al.(2015b)Kekatos, Zhang, Giannakis and
  Baldick}]{kekatos2015fast}
\bibinfo{author}{Kekatos, V.}, \bibinfo{author}{Zhang, L.},
  \bibinfo{author}{Giannakis, G.B.}, \bibinfo{author}{Baldick, R.},
  \bibinfo{year}{2015}b.
\newblock \bibinfo{title}{Fast localized voltage regulation in single-phase
  distribution grids}, in: \bibinfo{booktitle}{Smart Grid Communications
  (SmartGridComm), 2015 IEEE International Conference on},
  \bibinfo{organization}{IEEE}. pp. \bibinfo{pages}{725--730}.
\bibitem[{Kekatos et~al.(2016)Kekatos, Zhang, Giannakis and
  Baldick}]{kekatos2016voltage}
\bibinfo{author}{Kekatos, V.}, \bibinfo{author}{Zhang, L.},
  \bibinfo{author}{Giannakis, G.B.}, \bibinfo{author}{Baldick, R.},
  \bibinfo{year}{2016}.
\newblock \bibinfo{title}{Voltage regulation algorithms for multiphase power
  distribution grids}.
\newblock \bibinfo{journal}{IEEE Trans. Power Syst} \bibinfo{volume}{31},
  \bibinfo{pages}{3913--3923}.
\bibitem[{Li et~al.(2016)Li, Zhao and Chen}]{Li:Connecting}
\bibinfo{author}{Li, N.}, \bibinfo{author}{Zhao, C.}, \bibinfo{author}{Chen,
  L.}, \bibinfo{year}{2016}.
\newblock \bibinfo{title}{Connecting automatic generation control and economic
  dispatch from an optimization view}.
\newblock \bibinfo{journal}{IEEE Trans. Control Netw. Syst.}
  \bibinfo{volume}{3}, \bibinfo{pages}{254--264}.
\bibitem[{Liu et~al.(2017)Liu, Shi and Zhu}]{liu2017decentralized}
\bibinfo{author}{Liu, H.J.}, \bibinfo{author}{Shi, W.}, \bibinfo{author}{Zhu,
  H.}, \bibinfo{year}{2017}.
\newblock \bibinfo{title}{Decentralized dynamic optimization for power network
  voltage control}.
\newblock \bibinfo{journal}{IEEE Trans. Signal Inf. Process. Networks}
  \bibinfo{volume}{3}, \bibinfo{pages}{568--579}.
\bibitem[{Liu et~al.(2018a)Liu, Shi and Zhu}]{liu2018distributed}
\bibinfo{author}{Liu, H.J.}, \bibinfo{author}{Shi, W.}, \bibinfo{author}{Zhu,
  H.}, \bibinfo{year}{2018}a.
\newblock \bibinfo{title}{Distributed voltage control in distribution networks:
  Online and robust implementations}.
\newblock \bibinfo{journal}{IEEE Trans. Smart Grid} \bibinfo{volume}{9},
  \bibinfo{pages}{6106--6117}.
\bibitem[{Liu et~al.(2018b)Liu, Shi and Zhu}]{liu2018hybrid}
\bibinfo{author}{Liu, H.J.}, \bibinfo{author}{Shi, W.}, \bibinfo{author}{Zhu,
  H.}, \bibinfo{year}{2018}b.
\newblock \bibinfo{title}{Hybrid voltage control in distribution networks under
  limited communication rates}.
\newblock \bibinfo{journal}{IEEE Trans. Smart Grid} .
\bibitem[{Meyer(2000)}]{meyer2000matrix}
\bibinfo{author}{Meyer, C.D.}, \bibinfo{year}{2000}.
\newblock \bibinfo{title}{Matrix analysis and applied linear algebra}.
  volume~\bibinfo{volume}{71}.
\newblock \bibinfo{publisher}{Siam}.
\bibitem[{NREL(2018)}]{website_PV}
\bibinfo{author}{NREL}, \bibinfo{year}{2018}.
\newblock \bibinfo{title}{Solrmap utah geological survey}.
\newblock \bibinfo{howpublished}{\url{https://midcdmz.nrel.gov/usep_cedar/}}.
\bibitem[{Peng et~al.(2016)Peng, Xu, Yan and Yin}]{peng2016arock}
\bibinfo{author}{Peng, Z.}, \bibinfo{author}{Xu, Y.}, \bibinfo{author}{Yan,
  M.}, \bibinfo{author}{Yin, W.}, \bibinfo{year}{2016}.
\newblock \bibinfo{title}{Arock: an algorithmic framework for asynchronous
  parallel coordinate updates}.
\newblock \bibinfo{journal}{SIAM J. Sci. Comput.} \bibinfo{volume}{38},
  \bibinfo{pages}{A2851--A2879}.
\bibitem[{Ruszczy{\'n}ski and Ruszczynski(2006)}]{ruszczynski2006nonlinear}
\bibinfo{author}{Ruszczy{\'n}ski, A.P.}, \bibinfo{author}{Ruszczynski, A.},
  \bibinfo{year}{2006}.
\newblock \bibinfo{title}{Nonlinear optimization}. volume~\bibinfo{volume}{13}.
\newblock \bibinfo{publisher}{Princeton university press}.
\bibitem[{{\v{S}}ulc et~al.(2014){\v{S}}ulc, Backhaus and
  Chertkov}]{vsulc2014optimal}
\bibinfo{author}{{\v{S}}ulc, P.}, \bibinfo{author}{Backhaus, S.},
  \bibinfo{author}{Chertkov, M.}, \bibinfo{year}{2014}.
\newblock \bibinfo{title}{Optimal distributed control of reactive power via the
  alternating direction method of multipliers}.
\newblock \bibinfo{journal}{IEEE Trans. Energy Convers.} \bibinfo{volume}{29},
  \bibinfo{pages}{968--977}.
\bibitem[{Tang et~al.(2019)Tang, Hill and Liu}]{Tang2018Fast}
\bibinfo{author}{Tang, Z.}, \bibinfo{author}{Hill, D.J.}, \bibinfo{author}{Liu,
  T.}, \bibinfo{year}{2019}.
\newblock \bibinfo{title}{Fast distributed reactive power control for voltage
  regulation in distribution networks}.
\newblock \bibinfo{journal}{IEEE Trans. Power Syst.} \bibinfo{volume}{34},
  \bibinfo{pages}{802--805}.
\bibitem[{Turitsyn et~al.(2011)Turitsyn, Sulc, Backhaus and
  Chertkov}]{turitsyn2011options}
\bibinfo{author}{Turitsyn, K.}, \bibinfo{author}{Sulc, P.},
  \bibinfo{author}{Backhaus, S.}, \bibinfo{author}{Chertkov, M.},
  \bibinfo{year}{2011}.
\newblock \bibinfo{title}{Options for control of reactive power by distributed
  photovoltaic generators}.
\newblock \bibinfo{journal}{Proceedings of the IEEE} \bibinfo{volume}{99},
  \bibinfo{pages}{1063--1073}.
\bibitem[{UCI(2012)}]{website_load}
\bibinfo{author}{UCI}, \bibinfo{year}{2012}.
\newblock \bibinfo{title}{Individual household electric power consumption data
  set}.
\newblock
  \bibinfo{howpublished}{\url{https://archive.ics.uci.edu/ml/datasets/individual+household+electric+power+consumption}}.
\bibitem[{Wang et~al.(2019)Wang, Liu, Low, Zhao and Mei}]{wang2019distributed2}
\bibinfo{author}{Wang, Z.}, \bibinfo{author}{Liu, F.}, \bibinfo{author}{Low,
  S.H.}, \bibinfo{author}{Zhao, C.}, \bibinfo{author}{Mei, S.},
  \bibinfo{year}{2019}.
\newblock \bibinfo{title}{Distributed frequency control with operational
  constraints, part ii: Network power balance}.
\newblock \bibinfo{journal}{IEEE Trans. Smart Grid} \bibinfo{volume}{10},
  \bibinfo{pages}{53--64}.
\bibitem[{Yi and Pavel(2019a)}]{yi2018asynchronous}
\bibinfo{author}{Yi, P.}, \bibinfo{author}{Pavel, L.}, \bibinfo{year}{2019}a.
\newblock \bibinfo{title}{Asynchronous distributed algorithms for seeking
  generalized nash equilibria under full and partial decision information}.
\newblock \bibinfo{journal}{IEEE Trans. Cybern., DOI,
  10.1109/TCYB.2019.2908091} .
\bibitem[{Yi and Pavel(2019b)}]{yi2017distributed}
\bibinfo{author}{Yi, P.}, \bibinfo{author}{Pavel, L.}, \bibinfo{year}{2019}b.
\newblock \bibinfo{title}{An operator splitting approach for distributed
  generalized nash equilibria computation}.
\newblock \bibinfo{journal}{Automatica} \bibinfo{volume}{102},
  \bibinfo{pages}{111--121}.
\bibitem[{Zhang et~al.(2015)Zhang, Lam, Dom{\'\i}nguez-Garc{\'\i}a and
  Tse}]{zhang2015optimal}
\bibinfo{author}{Zhang, B.}, \bibinfo{author}{Lam, A.Y.},
  \bibinfo{author}{Dom{\'\i}nguez-Garc{\'\i}a, A.D.}, \bibinfo{author}{Tse,
  D.}, \bibinfo{year}{2015}.
\newblock \bibinfo{title}{An optimal and distributed method for voltage
  regulation in power distribution systems}.
\newblock \bibinfo{journal}{IEEE Trans. Power Syst.} \bibinfo{volume}{30},
  \bibinfo{pages}{1714--1726}.
\bibitem[{Zhou et~al.(2018)Zhou, Chen, Farivar, Liu and Low}]{zhou2018reverse}
\bibinfo{author}{Zhou, X.}, \bibinfo{author}{Chen, L.},
  \bibinfo{author}{Farivar, M.}, \bibinfo{author}{Liu, Z.},
  \bibinfo{author}{Low, S.}, \bibinfo{year}{2018}.
\newblock \bibinfo{title}{Reverse and forward engineering of local voltage
  control in distribution networks}.
\newblock \bibinfo{journal}{arXiv preprint arXiv:1801.02015} .
\bibitem[{Zhu and Liu(2016)}]{zhu2016fast}
\bibinfo{author}{Zhu, H.}, \bibinfo{author}{Liu, H.J.}, \bibinfo{year}{2016}.
\newblock \bibinfo{title}{Fast local voltage control under limited reactive
  power: Optimality and stability analysis}.
\newblock \bibinfo{journal}{IEEE Trans. Power Syst.} \bibinfo{volume}{31},
  \bibinfo{pages}{3794--3803}.

\end{thebibliography}

\end{document}